\documentclass[11pt]{article}
\usepackage{float}

\usepackage{framed}
\usepackage[linesnumbered,boxruled,vlined]{algorithm2e}
\usepackage{verbatim}
\usepackage{amsfonts,amsmath,amsthm}
\usepackage{enumerate}
\usepackage[font=small,labelfont=bf]{caption}
\usepackage{color}
\usepackage{graphicx}
\usepackage{tabulary}
\usepackage{diagbox}
\usepackage[dvipsnames]{xcolor}
\usepackage{pgfplots}
\usepackage[colorlinks=true, linkcolor=blue, urlcolor=blue, citecolor=ForestGreen]{hyperref}
\usepackage{comment}
\usepackage[utf8]{inputenc}
\usepackage[margin=1in]{geometry}

\usepackage{setspace}

\newcommand{\poly}{\operatorname{poly}}

\newcommand{\R}{\mathbb{R}}

\newcommand{\chamfer}{\operatorname{Chamfer}}

\newcommand{\nand}{\operatorname{NAND}}
\usepackage{footnote}

\newcommand{\cB}{\mathcal{B}}

\newcommand{\cF}{\mathcal{F}}

\newcommand{\cS}{\mathcal{S}}

\newcommand{\cV}{\mathcal{V}}

\newcommand{\eps}{\epsilon}

\newtheorem{theorem}{Theorem}
\newtheorem{lemma}{Lemma}

\newtheorem{proposition}{Proposition}
\newtheorem{definition}{Definition}

\usepackage[framemethod=TikZ]{mdframed}
\newcounter{Frame}

\title{Multi-Vector Embeddings are Provably More Expressive than Single Vector Embeddings}

\author{%
  Rajesh Jayaram \\
  Google Research \\
  \texttt{rkjayaram@google.com} \\
}

\begin{document}

\maketitle

\begin{abstract}
Multi-vector (MV) embeddings have become a powerful paradigm in neural information retrieval (IR), achieving high retrieval accuracy by representing data with multiple vectors and scoring them via the non-linear Chamfer similarity. Despite their widely perceived superiority over \textit{single-vector} (SV) embeddings which use inner product similarity, to date there is no formal proof that SV similarities cannot approximate MV similarities with the same representation size. Specifically, we ask the following: 
for any bounded dataset size $n \leq 2^{\poly(m)}$, what is the smallest dimension $D$ so that given any collection of MV embeddings $Q_1,\dots,Q_n,X_1,\dots,X_n \subset \R^d$ containing at most $m$ vectors each, there always exist $q_1,\dots,q_n$, $d_1,\dots,d_n \in \R^{D}$ satisfying $|\langle q_i, d_j \rangle - \chamfer(Q_i,X_j)| \leq \eps$ for all $i,j$?
    Recently, the MUVERA algorithm~\cite{jayaram2024muvera} demonstrated that $D =  m^{O(1/\eps^2)}$ is possible.  If improved to $D = md$, this would imply that MV embeddings are no more expressive than SV embeddings.

In this paper, we rule out this scenario. 
Specifically, we prove the existence of a collection of MV embeddings in $\R^d$, each containing at most $m$ vectors, which require single-vector dimension of $D =(\eps^2 m)^{\Omega(1/\eps)}$ to approximate, establishing a strong separation in representation size between MV and SV embeddings. 
Our proof leverages the Pattern Matrix Method by constructing a hard instance whose Chamfer similarity matrix encodes the $NAND_k$ boolean function. Our results confirm a long-held belief in the IR community: at a fixed representation size, multi-vector embeddings can express similarities which cannot even be approximately represented by single vector embeddings. This result provides strong theoretical justification for the continued development of multi-vector models in practice.   
\end{abstract}



\section{Introduction}

A fundamental problem in the field of Neural Information Retrieval (IR) is the representation of data by elements in a metric space~\cite{zhang2016neural}. The traditional approach has been to utilize single-vector (SV) embedding models, which produce one embedding vector $x \in \R^d$ per data-point---such as an image, video, or paragraph---and the similarity between two data-points is given by the inner product $\langle x,y \rangle$ of their embeddings. SV models have become extremely widespread over the past decade for IR tasks, in addition to many other tasks such as clustering and classification~\cite{muennighoff2022mteb}. 

More recently, \emph{multi-vector} (MV) representations, introduced in ColBERT~\cite{khattab2020colbert}, have been shown to deliver significantly improved performance on popular IR
benchmarks. MV Models produce a \textit{set} of embeddings, also called a point-cloud, for each data-point. Given a query MV representation $Q \subset \R^d$ and a document MV representation $X \subset \R^d$,  the query-document similarity is scored via the \emph{Chamfer Similarity}: defined as $\chamfer(Q,X) = \frac{1}{|Q|} \sum_{q \in Q} \max_{x \in X} \langle q,x\rangle$. This non-linear, asymmetric similarity measure allows for fine-grained interactions between the individual vectors in the query and document point-clouds.

 Since their introduction in \cite{khattab2020colbert}, multi-vector embedding models have received significant attention in both industry and academia. The reason for this popularity is largely predicated on the hypothesis that multi-vector embeddings are fundamentally more expressive than single-vector embeddings. This is supported experimentally, where multi-vector models significantly outperform single vector models on several benchmarks, especially multimodal benchmarks such as ViDoRE~\cite{faysse2025colpali}. However, despite this experimental success, there is to date no formal proof that multi-vector models are more expressive than single vector models at any fixed representation size.
 
 For instance, consider a collection of $n$ query MV embeddings $Q_1,\dots,Q_n\subset \R^d$ and $n$ document MV embeddings $X_1,\dots,X_n \subset \R^d$, each containing at most $m$ vectors; the representation size of each point-cloud is $md$, measured in words in the word RAM model, or floats in practice. If, hypothetically, there always exists corresponding single-vector embeddings $q_1,\dots,q_n,x_1,\dots,x_n \in \R^{d \cdot m}$ such that for every $i,j$ the inner product $\langle q_i, x_j \rangle$ is a high-accuracy approximation of the Chamfer Similarity $\chamfer(Q_i,X_j)$, then this would essentially \emph{disprove} the widely held view that MV models are inherently more expressive than SV models at a fixed representation size. Such a result would be a significant blow to the purported benefits of multi-vector, as it in effect implies that any multi-vector embedding can be represented just as well as a single-vector embedding \textit{with the same total number of dimensions}. 
 
This motivates the following question: is it possible to design two\footnote{Since Chamfer is asymmetric and inner products are symmetric, it is clear that any mapping from Chamfer into Inner product requires two \textit{distinct} functions $F_q,F_{doc}$, one for the query side and one for the document side. } functions $F_q,F_{doc}: (\R^d)^m \to \R^{D}$ such that, for any two sets $Q,X \subset \R^d$ of at most $m$ unit\footnote{When $Q,X$ contain vectors with norm at most $1$, then $\chamfer(Q,X) \in [-1,1]$, thus an $\eps$ additive approximation is the correct scaling for the problem.} vectors we have: 
 \[ \left| \langle F_q(Q), F_{doc}(X) \rangle- \chamfer(Q,X) \right|  \leq \eps  \]

Such a mapping can be thought of as an \textit{embedding} (in the sense of Metric \emph{Embedding} Theory~\cite{indyk20178}) of multi-vector similarity into inner-product similarity. The critical question is how small the target dimension $D$ must be. 
Recently, the authors of MUVERA~\cite{jayaram2024muvera} studied exactly this question, and demonstrated that such a mapping is possible with $D = m^{O(1/\epsilon^2)}\log^2(1/\delta)$, where $\delta$ is the failure probability.\footnote{The original dimensionality $d$ of the vectors in the point cloud disappears after a suitable application of the Johnson-Lindenstrauss transform\cite{johnson1984extensions}.} Since then, it has been open whether this dimensionality can be improved.  In particular, to demonstrate that multi-vector models are strictly more expressive than single-vector models, it would suffice to prove a lower bound of $D=\omega(md)$.


In this paper we partially resolve this open question by proving a lower bound of $m^{\Omega(1/\eps)}$ for the dimension of any single-vector approximation of Chamfer. If $d = \poly(m)$, which can be assumed WLOG whenever the number of query and document sets is at most $\exp(\poly(m))$ by applying the Johnson-Lindenstrauss transform, this demonstrates a strong separation between MV and SV embeddings. Namely, any $\eps =1/\poly(m)$ approximation of MV by SV must use \textit{exponentially} more dimensions, and for any constant $\eps$ it must
still use polynomially more dimensions. Specifically, our main theorem is as follows:

\begin{theorem}\label{thm:main}
 Fix any $\eps \in \left(0,\frac{1}{12}\right)$ and $m \geq 1/\eps^2$, and set $d=2m$. Then there exist unit vectors $x_1,\dots,x_{n_1} \in \R^d$ and sets $Y_1,\dots,Y_{n_2} \subset \R^d$ each containing at most $m$ unit vectors, such that any vectors $a_1,\dots,a_{n_1},b_1,\dots,b_{n_2} \in \R^{D}$ which satisfy
 \[    \left|\langle a_i, b_j \rangle- \chamfer(x_i,Y_j) \right| \leq \epsilon \]
 for all $i \in [n_1], j \in [n_2]$ must have dimension $D = (\eps^2 m)^{\Omega(1/\eps)}$.
    Moreover, we have $n_1 = (m/k)^k \cdot 2^k$ where $k=\Theta(1/\eps^2)$, and $n_2 = 2^m$. 
\end{theorem}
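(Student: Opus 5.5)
The plan is to embed a pattern matrix of the $\nand_k$ function into the Chamfer similarity matrix. Then approximate rank lower bounds (Sherstov's Pattern Matrix Method) rule out any low-dimensional inner-product approximation.

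\textbf{Setup and encoding.} Set $k=\Theta(1/\eps^2)$, split the $m$ coordinates into $k$ blocks of size $m/k$, and use $d=2m$ coordinates indexed by pairs $(\ell,\sigma)$ with $\ell\in[m]$ and $\sigma\in\{0,1\}$. A document $Y_j$ corresponds to a string $y\in\{0,1\}^m$, and we set $Y_j=\{e_{(\ell,y_\ell)}:\ell\in[m]\}$; this gives $n_2=2^m$ and at most $m$ unit vectors. A query $x_i$ corresponds to a pattern: a choice $S=(s_1,\dots,s_k)$ of one coordinate per block together with a sign pattern $w\in\{0,1\}^k$. This gives $n_1=(m/k)^k2^k$. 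We take
\[ x_i=\frac{1}{\sqrt{k}}\sum_{t=1}^k e_{(s_t,1\oplus w_t)}. \]
Then $\max_{y\in Y_j}\langle x_i,y\rangle=\frac{1}{\sqrt k}\cdot\mathbb{1}[\exists t:\ y_{s_t}\oplus w_t=1]$, because each $e_{(\ell,\cdot)}$ hits at most one term. Hence $\chamfer(x_i,Y_j)=\frac{1}{\sqrt k}\,\mathrm{OR}_k(y|_S\oplus w)$. Up to negation and an affine shift this is $\nand_k$ (equivalently $\mathrm{OR}_k$) evaluated on the pattern matrix $[f(y|_S\oplus w)]_{(S,w),y}$. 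This is exactly Sherstov's pattern matrix with parameters $n=m$ and $k$ blocks.

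\textbf{Rank argument.} If $\langle a_i,b_j\rangle$ approximates this matrix within $\eps$, then $M'=\sqrt{k}\,(\langle a_i,b_j\rangle)_{i,j}$ has rank at most $D$ and approximates the $0/1$ OR pattern matrix within $\sqrt{k}\,\eps$. We choose the constant in $k=c/\eps^2$ so that $\sqrt k\,\eps=\sqrt{c}$ is a small constant below $1/2$; after the affine map to $\pm1$ values the error is some constant $\delta<1$, and the rank grows by at most $1$. The Pattern Matrix Method gives
\[ \mathrm{rank}_\delta \ge \left(\frac{m}{k}\right)^{\deg_{\delta'}(f)/2}\cdot(\text{constant factor}), \]
where $\deg_{\delta'}$ is the approximate degree at a related constant error $\delta'$. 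We then invoke the Nisan–Szegedy/Paturi bound $\deg_{\delta'}(\mathrm{OR}_k)=\Omega(\sqrt{k})=\Omega(1/\eps)$. Combining, $D\ge (m/k)^{\Omega(1/\eps)}=(\eps^2 m)^{\Omega(1/\eps)}$, using $m\ge 1/\eps^2$ so the base is at least a constant. Any loss from the constant factors can be absorbed when $\eps^2m$ is large; small cases follow trivially.

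\textbf{Main obstacle.} The delicate step is the normalization tradeoff. The Chamfer gap between the two function values is only $1/\sqrt{k}$, so the SV error $\eps$ becomes relative error $\eps\sqrt{k}$. This forces $k\approx 1/\eps^2$ and yields degree $\sqrt{k}\approx 1/\eps$ rather than $1/\eps^2$. We must also check that Sherstov's theorem applies at the resulting constant error, via the dual polynomial / approximate rank formulation with a $\delta'$ bounded away from $1$. We also need the constraint $\eps<1/12$ to leave room for the affine shift and the rank-one correction.
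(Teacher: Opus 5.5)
Your proposal is correct and matches the paper's proof essentially step for step. Your instance is exactly the paper's: your $\mathrm{OR}_k$ with sign pattern $w$ is the paper's $\nand_k$ with pattern $\bar{w}$. The paper likewise rescales by $2\sqrt{k}$, subtracts the all-ones matrix at a rank cost of one, applies Sherstov's Theorem 8.1 with errors $1/4$ and $1/3$ for $k=\lfloor 1/(64\eps^2)\rfloor$, and gets the $\Omega(\sqrt{k})$ degree bound from Paturi's theorem.

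Two minor remarks. Sherstov's bound puts the full approximate degree in the exponent rather than half of it, though your weaker form still suffices. The hypothesis $\eps<1/12$ is used only to guarantee $k\ge 2$, not to make room for the affine shift or the rank-one correction.
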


To put Theorem \ref{thm:main} into context, setting the failure probability $\delta < \frac{1}{\poly(n_1,n_2)}$, the MUVERA algorithm \cite{jayaram2024muvera} deterministically guarantees 
the existence of such vectors $a_i,b_j$ with dimension $D = m^{O(1/\eps^2)}$. Note that for $m > 1/\eps^{2+\gamma}$ for any constant $\gamma>0$, our lower bound becomes $m^{\Omega(1/\eps)}$. 
Thus, our lower bound nearly matches the upper bound of MUVERA, demonstrating that significant improvements for embedding MV into SV are not possible.  We also note that MUVERA is an oblivious embedding:  the mapping of point clouds to single-vectors does not depend on the set of point clouds to be embedded. On the other hand, Theorem \ref{thm:main} proves a lower bound against even  data-dependent (i.e. non-oblivious) embeddings, where the functions $F_q,F_{doc}$ are allowed to depend on the input point clouds $X_i,Y_j$.

Our proof utilizes the Pattern Matrix Method of Sherstov~\cite{sherstov2008pattern}, which is a method that allows one to lower bound the approximate rank of the so-called pattern matrix of a boolean function $f$ by the approximate polynomial degree of $f$. We construct the hard instance of Chamfer sets $x_i,Y_j$ such that their $n_1 \times n_2$ Chamfer similarity matrix $M$ is precisely equal to the pattern matrix of the $NAND_k$ function. By definition, the best $\eps$-approximation of these sets by single vectors is given by the $\eps$-approximate rank of $M$. We then lower bound the approximate polynomial degree of $NAND_k$ to complete the proof.

We observe that, in the hard instance in Theorem \ref{thm:main}, the queries are just single vectors $x_i \in \R^d$, and only the document side $Y_j \subset \R^d$ are point-clouds of size at most $m$. In this case, the Chamfer similarity reduces to $\max_{y \in Y} \langle x,y\rangle$, which is known as the Maximum Inner Product (MAX-IP) similarity. Thus, Theorem \ref{thm:main}  proves an even stronger result, by lower bounding the dimension of any approximation of the MAX-IP similarity by the Inner Product (IP) similarity.

A natural question is whether one can do better than MUVERA for embedding MAX-IP into IP. While we do not fully resolve this question, we demonstrate the existence of a \textit{Snowflake Embedding}~\cite{david1997fractured,tyson2005characterizations} of the Maximum Absolute Value of Inner Products Similarity (MAX-ABS-IP) into IP, defined via MAX-ABS-IP$(x,Y) = \max_{y \in Y}
|\langle x,y\rangle|$. Here, a snowflake embedding from a space $X$ to a space $Y$ means that similarities in $X$ are approximated by some positive real power of similarities in $Y$; in particular, such an embedding (approximately) preserves nearest neighbor orderings. Specifically, in Section \ref{sec:ub}, we prove the following theorem.

\begin{theorem}\label{thm:snowflake}
    Fix any error tolerance $\epsilon \in (0, 1)$, failure probability $\delta \in (0, 1)$, input dimension $d \geq 1$, and point cloud size $m \geq 1$. Let $k$ be the smallest even integer with $k \ge \frac{4}{\epsilon} \ln m$. Then there exist two oblivious, randomized mappings $F_q:\R^d \to \R^t$ and $F_{doc}: (\R^d)^m \to \R^t$,  where $t =  m^{O\left(\ln (1/\eps)/\epsilon\right)} \cdot \log(1/\delta)$, such that for any set of unit vectors $x, y_1, \dots, y_m \in \mathbb{R}^d$ we have
    \[ \left| | \langle F_q(x), F_{doc}(y_1,\dots,y_m) \rangle|^{1/k} -  \max_{i=1}^m |\langle x, y_i \rangle| \right| < \epsilon \]
with probability at least $1-\delta$.
\end{theorem}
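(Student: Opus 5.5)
The plan is to use the identity $\max_i |\langle x,y_i\rangle| \approx \big(\sum_{i=1}^m \langle x,y_i\rangle^k\big)^{1/k}$ for even $k$, and to realize the sum of $k$-th powers as an inner product of tensor powers, followed by a random projection to reduce dimension. Write $s=\max_i|\langle x,y_i\rangle| \in [0,1]$ and $S=\sum_i \langle x,y_i\rangle^k$. Since $k$ is even, every term is nonnegative, so $s^k \le S \le m s^k$. Hence $s \le S^{1/k} \le m^{1/k}s$. With $k \ge \frac{4}{\eps}\ln m$ we get $m^{1/k} \le e^{\eps/4} \le 1+\eps/2$ for $\eps<1$. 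Thus $|S^{1/k}-s| \le \eps s/2 \le \eps/2$.

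Next I would make $S$ an inner product. Note that $\langle x,y_i\rangle^k = \langle x^{\otimes k}, y_i^{\otimes k}\rangle$. So the exact maps $G_q(x)=x^{\otimes k}$ and $G_{doc}(Y)=\sum_i y_i^{\otimes k}$ in $\R^{d^k}$ satisfy $\langle G_q(x),G_{doc}(Y)\rangle = S$. Before tensoring, I would apply a Johnson--Lindenstrauss map to reduce $d$ to $d'=O(\log(m/\delta)/\eps'^2)$. This preserves all $\langle x,y_i\rangle$ up to additive $\eps'$. The dimension then becomes $d'^k$, which still depends on $\delta$ inside the exponent, so it is better to sketch the tensors directly.

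For the sketch, I would apply a random map $\Pi:\R^{d^k}\to\R^t$, for instance a TensorSketch or a Gaussian sketch applied implicitly, and set $F_q(x)=\Pi x^{\otimes k}$ and $F_{doc}(Y)=\Pi G_{doc}(Y)$. The standard inner-product guarantee gives $|\langle \Pi u,\Pi v\rangle - \langle u,v\rangle| \le \eta\|u\|\|v\|$ with probability $1-\delta$ when $t=O(\eta^{-2}\log(1/\delta))$ (with an extra factor $3^k$ for TensorSketch). Here $\|x^{\otimes k}\|=1$ and $\|G_{doc}(Y)\|\le m$, so the additive error in $S$ is $\eta m$. The median-of-means trick gives the $\log(1/\delta)$ dependence multiplicatively.

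The main obstacle is converting additive error in $S$ into error in $S^{1/k}$, because $z\mapsto z^{1/k}$ is very steep near $0$. I would handle this by cases. If $s \le \eps/2$, then $S \le m(\eps/2)^k$. Provided the additive error is at most $(\eps/2)^k$, both the true value and the estimate have $k$-th root at most about $m^{1/k}\eps/2\cdot 2^{1/k} \le \eps$, so the error is below $\eps$. Absolute values also handle the case where the estimate turns negative. If $s > \eps/2$, then $S \ge (\eps/2)^k$, and a relative error $\eta' \le \eps/4$ in $S$ perturbs $S^{1/k}$ multiplicatively by at most $(1\pm\eta')^{1/k}$, which is within $\eps/2$. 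Combining this with the first step gives total error below $\eps$. Both cases only require additive accuracy $\eta m \le \eps(\eps/2)^k/4$, i.e. $\eta = (\eps/2)^{O(k)}/m$. Then $t = \eta^{-2}\log(1/\delta)$ (times $3^k$ if needed) $= (2/\eps)^{O(k)} m^2 \log(1/\delta)$. Since $k = \Theta(\ln m/\eps)$, we have $(1/\eps)^{O(k)} = m^{O(\ln(1/\eps)/\eps)}$, which yields $t = m^{O(\ln(1/\eps)/\eps)}\log(1/\delta)$ as claimed. Obliviousness holds because $\Pi$ is chosen independently of the data.
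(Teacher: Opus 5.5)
Your proposal is correct and follows the paper's proof: the same maps $x\mapsto x^{\otimes k}$ and $Y\mapsto\sum_i y_i^{\otimes k}$, a dense sign/Gaussian JL matrix with additive accuracy $(\eps/2)^{\Theta(k)}/m$, and the same $\ell_k$-versus-$\ell_\infty$ bound. The paper replaces your small-$s$/large-$s$ case split with a one-line step, $|A^{1/k}-B^{1/k}|\le|A-B|^{1/k}$, and needs accuracy exactly $(\eps/2)^k$. One small correction: use the dense JL sketch, which already gives $t=O(\eta^{-2}\log(1/\delta))$ with no median trick, and drop the TensorSketch-plus-median-of-means option, because a median of several inner products is not itself an inner product $\langle F_q(x),F_{doc}(Y)\rangle$.
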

In other words, Theorem \ref{thm:snowflake} demonstrates that an improved exponent of $O(\ln (1/\eps) / \eps)$ can be obtained for approximating the MAX-ABS-IP similarity with the $1/k$-th power of the IP similarity. While this comes closer to matching the exponent in the lower bound of Theorem \ref{thm:main}, it is important to note that Theorem \ref{thm:main} does not directly apply to the embedding of Theorem \ref{thm:snowflake}, since it would require $k=1$ and an approximation of the MAX-IP (not MAX-ABS-IP). We leave it as an open problem to determine if Theorem \ref{thm:snowflake} can be made to hold with $k=1$.

Despite applying to the seemingly weaker MAX-ABS-IP,  we note that Theorem \ref{thm:snowflake} can directly apply
to MAX-IP via a simple "non-negative lift" transformation. Specifically, given any original unit vectors $x, y \in \mathbb{R}^d$, define the affine transformation $\tilde{x} = \frac{1}{\sqrt{2}}(x \oplus 1)$ and $\tilde{y} = \frac{1}{\sqrt{2}}(y \oplus 1)$ in $\mathbb{R}^{d+1}$. This shifted space guarantees that all pairwise inner products are strictly non-negative, since $\langle \tilde{x}, \tilde{y} \rangle = \frac{1}{2}\langle x, y \rangle + \frac{1}{2} \in [0, 1]$,  yielding $\text{MAX-ABS-IP}(\tilde{x}, \tilde{Y}) = \frac{1}{2}\text{MAX-IP}(x, Y) + \frac{1}{2}$. Applying the Snowflake embedding to these transformed vectors yields single-vector representations whose inner product approximates $\left(\frac{1}{2}\text{MAX-IP}(x, Y) + \frac{1}{2}\right)^k$. 
Crucially, for nearest neighbor search one only needs to preserve the relative ordering of documents. Because the function $z \mapsto z^k$ is strictly monotonically increasing for $z \ge 0$, applying Theorem \ref{thm:snowflake} to the non-negative lifted vectors approximately preserves the rankings of the original MAX-IP similarity within the IP space. This demonstrates that single vector inner products are at least expressive enough to approximately preserve the \emph{ordering} of query-document similarities derived from a MAX-IP similarity model with dimension $m^{O(\frac{1}{\epsilon}\ln(1/\epsilon))}$, strictly improving upon the corresponding $m^{O(1/\epsilon^2)}$ bound resulting from MUVERA.  


Furthermore, we point out that for a broad class of modern representation models which naively produce non-negative embeddings, this geometric lifting is entirely unnecessary. For example, learned sparse retrieval models such as SPLADE~\cite{formal2021splade} explicitly apply a $\log(1 + \text{ReLU}(\cdot))$ activation over their vocabulary-weighting layers. In such ubiquitous settings, MAX-ABS-IP is identical to standard MAX-IP, and Theorem \ref{thm:snowflake} directly applies.





\subsection{Related work}

Since their introduction in \cite{khattab2020colbert}, there has been significant interest in the study of multi-vector embeddings, including modeling~\cite{gao2021coil,hofstatter2022introducing,lee2024rethinking,lin2024fine,qian2022multi,santhanam2021colbertv2,wang2021pseudo,yao2021filip, faysse2025colpali}, retrieval ~\cite{jayaram2024muvera, santhanam2022plaid, engels2024dessert,nardini2024efficient,scheerer2025warp}, and for fast algorithms for computing the Chamfer Similarity (and associated Chamfer Distance) itself \cite{bakshi2023near,halevi2026approximate,goranci2026fully}. However, to date there have only been a few papers which attempt to definitively study the expressive power of single vector as it relates to multi-vector embeddings. 

The first is the work of Wellner et al.~\cite{weller2025theoretical}, who investigated the theoretical limitations of SV embeddings, and show that for any fixed embedding dimension $d$, there are certain combinations of $k$ documents which cannot occur as the $k$-nearest neighbors of any query vector. The authors suggest that MV models may offer a way around this limit, but do not give any formal separation. Agarwal et al.~\cite{agarwal2026strengths} show that for a universe of random vectors on the unit sphere, and for a dataset of queries and documents which are subsets of this universe, single-vector models which average their set of vectors together perform worse in a certain ``goodness'' metric (introduced by the authors) when compared with MV models. This indicates that averaging is likely a poor mapping from MV into SV, but falls significantly short of proving a general lower bound against \textit{all} possible mappings. Finally, Lakshman et al.~\cite{lakshman2025breaking} show that MV retrieval has an improved ``stability" property when compared to average pooling; however, similarly their work does not provide explicit dimension bounds that would separate MV and SV embeddings. 

Thus, while the aforementioned works establish that single-vector embeddings struggle with complex retrieval geometries—and that multi-vector embeddings may overcome them, they do not discuss the question of explicit representation size, leaving open the possibility that a single-vector model with $O(md)$ dimensions could give an arbitrarily good approximation of a multi-vector model outputting sets of $m$ vectors in $d$ dimensions.

\section{Preliminaries}

As previously mentioned, given any sets $Q,X \subset \R^d$, we define:
\[\chamfer(Q,X) = \frac{1}{|Q|} \sum_{q \in Q} \max_{x \in X} \langle q,x\rangle\]
Following the convention in the IR literature, we will refer to the first operand of $\chamfer(\cdot ,\cdot)$ as the \textit{query}, and the second as the \textit{document}. We will often refer to any such subset of vectors with the same dimension, e.g. $A \subset \R^d$, as a \textit{point-cloud}. We say that a collection $\mathcal{A} = \{A_i\}_i$ of point clouds is $m$-bounded if $|A_i| \leq m$ for all $A_i \in \mathcal{A}$. For any $d \geq 1$ we let $\cB^d = \{x \in \R^d \; | \; \|x\|_2 \leq 1\}$ be the $d$-dimensional unit ball, and use the standard notation $\cS^{d-1} = \{x \in \R^d \; | \; \|x\|_2 = 1\}$ for the unit sphere in $\R^d$. Further, we define $\cF^d_m = \{ A \subset \R^d \; | \; |A| \leq m \}$ as the set of $m$-bounded point-clouds in $d$ dimensions, and similarly $\cB^d_m = \{ A \subset \cB^d \; | \; |A| \leq m \}$ and $\cS^{d-1}_m = \{ A \subset \cS^{d-1} \; | \; |A| = m \}$. 






\begin{definition}
    An $(\eps,\delta)$ oblivious embedding from the Chamfer similarity into the (Euclidean) inner product consists of two randomized mappings, $F_q: \cB^d_m\to \mathbb{R}^D$ and $F_{doc}: \cB^d_m \to \mathbb{R}^D$, such that for any $X,Y \in \cB^d_m $ :
\[ 
\mathbb{P}\Big( \big| \langle F_q(X), F_{doc}(Y) \rangle - \chamfer(X,Y)\big| \le \epsilon \Big) \ge 1- \delta
\]
Further, we can define an $(\eps,\delta)$ oblivious embedding from maximum inner-product into inner product as mappings $F_q: \cB^d\to \mathbb{R}^D$ and $F_{doc}: \cB^d_m \to \mathbb{R}^D$, such that for any $x \in \cB^d, Y \in \cB_m^d$:

\[ 
\mathbb{P}\Big( \big| \langle F_q(x), F_{doc}(Y) \rangle - \max_{y \in Y} \langle x, y \rangle \big| \le \epsilon \Big) \ge 1-\delta
\]
\end{definition}
By summing the query functions together, using linearity of the inner product and a union bound, it is easy to see that an $(\eps,\delta/m)$ oblivious embedding for MAX-IP into IP implies an $(\eps,\delta)$ oblivious embedding from Chamfer over sets of size $m$ into IP. 
In MUVERA \cite{jayaram2024muvera}, the authors proved the following:



\begin{theorem}[MUVERA\cite{jayaram2024muvera}]\label{thm:muvera}
   For any $\eps ,\delta  \in (0,1)$, there exists a randomized $(\eps,\delta)$ oblivious embedding from Chamfer into IP with dimension $D = m^{O(1/\eps^2)}\cdot \log^2(1/\delta)$.\footnote{The conference version of this paper claimed a target dimension $m^{O(1/\eps)}$ but contained an error in the proof. The fixed version with the corrected bound is given in \cite{dhulipala2024muveraarxiv}} 
\end{theorem}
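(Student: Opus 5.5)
The plan is to follow the MUVERA construction of \emph{Fixed Dimensional Encodings} (FDEs), built from SimHash space partitions, and to reduce the Chamfer guarantee to a per-query-vector MAX-IP guarantee. As remarked after the definition of oblivious embeddings, it suffices to build an $(\eps,\delta/m)$ oblivious embedding from MAX-IP into IP and then sum the query maps (scaled by $1/|Q|$) over $q\in Q$. A union bound over the $m$ query vectors then yields the Chamfer statement.

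\textbf{Construction.} Draw $k_{sim}$ i.i.d.\ Gaussian hyperplanes; they partition $\R^d$ into $B=2^{k_{sim}}$ buckets $\varphi(\cdot)$. For a query vector $q$, the query block is $q$ placed in coordinate block $\varphi(q)$, with zeros elsewhere. For a document $Y$, block $b$ holds the centroid of $\{y\in Y:\varphi(y)=b\}$. If that set is empty (\emph{fill-empty}), block $b$ instead holds the $y\in Y$ whose sign pattern is Hamming-closest to $b$. The resulting inner product is $\langle q, c_{\varphi(q)}(Y)\rangle$, where $c_b(Y)$ is a convex combination of points of $Y$. Hence it is always at most $\max_{y}\langle q,y\rangle$, so only the lower bound needs work. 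I repeat this independently $R$ times and average, which gives dimension $dBR$. Finally I apply a Johnson--Lindenstrauss projection to replace $d$ by $O(\log(m/\delta)/\eps^2)$; this costs only an additive $\eps$ on inner products of unit-bounded vectors.

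\textbf{Analysis of one repetition.} Let $y^*=\arg\max_y\langle q,y\rangle$, and call $y$ \emph{bad} if $\langle q,y\rangle<\langle q,y^*\rangle-\eps$. A point $y$ at angle $\theta_y$ from $q$ collides with $q$ with probability $(1-\theta_y/\pi)^{k_{sim}}$. I want every bad $y$ to satisfy
\[\frac{(1-\theta_y/\pi)^{k_{sim}}}{(1-\theta_{y^*}/\pi)^{k_{sim}}}\le \frac{\eps}{m},\]
so that, conditioned on $y^*$ landing in $q$'s bucket, the expected total weight of bad points in the centroid is $O(\eps)$. This gives a bucket value of at least $\max-O(\eps)$ with constant probability. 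The collision probability of $y^*$ itself must be handled via fill-empty: when $y^*$'s bucket differs from $q$'s, either some good point shares $q$'s bucket, or the fill-empty choice is Hamming-close and hence, by a similar calculation, angularly close to $q$.

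\textbf{Main obstacle.} The hardest step is converting the additive inner-product gap $\eps$ into an angular gap. Since $\cos$ is flat near $0$ and $\pi$, an inner-product gap of $\eps$ can correspond to an angular gap of only $\Theta(\eps)$ relative to an angle that may itself be $\Theta(\sqrt{\eps})$. The required ratio then forces $k_{sim}=O(\log m/\eps^2)$ rather than $O(\log m/\eps)$; this is precisely the error corrected in the arXiv version, and it gives $B=m^{O(1/\eps^2)}$.

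\textbf{Concentration and dimension count.} Each single-repetition estimate is bounded in $[-1,1]$ and has expectation at least $\max-O(\eps)$. A Chernoff/Hoeffding bound over $R=O(\log(m/\delta)/\eps^2)$ repetitions therefore gives failure probability $\delta/m$. Multiplying out, $D=B\cdot R\cdot O(\log(m/\delta)/\eps^2)=m^{O(1/\eps^2)}\log^2(1/\delta)$, where polynomial factors in $m$ and $1/\eps$ are absorbed into $m^{O(1/\eps^2)}$.
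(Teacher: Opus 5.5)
The paper does not prove this theorem; it imports it from the corrected arXiv version of MUVERA. Your sketch therefore has to stand on its own, and its central single-repetition step has a genuine gap.

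Conditioning on $y^*$ falling into $q$'s bucket covers only an event of probability $p^*=(1-\theta_{y^*}/\pi)^{k_{sim}}$. For example, $p^*\le 2^{-k_{sim}}$ whenever $\langle q,y^*\rangle\le 0$, which is $m^{-\Theta(1/\eps^2)}$ for your choice of $k_{sim}$. So the bucket value being at least $\max-O(\eps)$ ``with constant probability'' does not follow. Essentially all of the expectation comes from the cases you dismiss in one sentence.

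The nonempty-bucket case is easy if you argue unconditionally: your ratio bound gives $\Pr[\text{some bad } y \text{ shares } q\text{'s bucket}]\le m\cdot\frac{\eps}{m}\,p^*\le\eps$. The fill-empty case, however, is not ``a similar calculation.'' Being Hamming-close to $q$ in absolute terms is not what you need. You need the Hamming-closest point to have inner product within $\eps$ of $\langle q,y^*\rangle$, i.e.\ $H(q,y)>H(q,y^*)$ for every bad $y$, where $H$ is the Hamming distance between sign patterns. That is a comparison of two correlated binomials, not a ratio of collision probabilities.

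To close the gap, write $H(q,y)-H(q,y^*)=\sum_{i=1}^{k_{sim}}X_i$ with i.i.d.\ $X_i\in\{-1,0,1\}$ of mean $(\theta_y-\theta_{y^*})/\pi\ge\eps/\pi$. Hoeffding plus a union bound shows that the event $\mathcal{E}=\{H(q,y)>H(q,y^*)\text{ for all bad }y\}$ fails with probability at most $m\,e^{-k_{sim}\eps^2/(2\pi^2)}$. This is $O(\eps)$ for $k_{sim}=O(\log(m/\eps)/\eps^2)$. On $\mathcal{E}$ both cases are handled at once: bad points have $H\ge 1$, so they never share $q$'s bucket, and the fill-empty argmin is never bad.

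This step is where $1/\eps^2$ is genuinely forced. Take $Y=\{y^*,-y^*\}$ with $\langle q,y^*\rangle\approx 1/\sqrt{k_{sim}}$. Then every $X_i=\pm1$ and the per-step drift is only $\Theta(1/\sqrt{k_{sim}})$. So fill-empty picks $-y^*$ with constant probability, the expected loss is $\Theta(1/\sqrt{k_{sim}})$, and $O(\eps)$ error requires $k_{sim}=\Omega(1/\eps^2)$. Averaging over repetitions cannot remove this bias.

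Your ``main obstacle'' paragraph puts the $1/\eps^2$ in the wrong place. Since $|\cos a-\cos b|\le|a-b|$, an inner-product gap of $\eps$ always forces an angular gap of at least $\eps$; flatness of $\cos$ near $0$ or $\pi$ only makes that gap larger. Hence your ratio is at most $(1-\eps/\pi)^{k_{sim}}$, and $k_{sim}=O(\log(m/\eps)/\eps)$ already suffices for it. Your final $k_{sim}$ happens to be large enough, but the argument you give does not establish the fill-empty step that actually needs it.
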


It follows from Theorem \ref{thm:muvera} that for any set of at most $2^{\poly(m)}$ query and document point clouds containing vectors with norm at most $1$, there exists a set of single-vectors in dimension $m^{O(1/\eps^2})$  which approximate all pairwise Chamfer similarities of the point clouds to additive error $\eps$. Since $n_1,n_2 \leq 2^m$ in the hard instance of Theorem \ref{thm:main}, it follows that an upper bound of $D=m^{O(1/\eps^2})$ exists for our hard instance. 

We now introduce two key ingredients in our lower bound: the approximate polynomial degree of a function, and the approximate rank of a matrix. Firstly, for any function $f: \{0,1\}^n \to \mathbb{R}$ and $\eps \in (0,1)$, the $\epsilon$-approximate degree of $f$, denoted $\deg_\epsilon(f)$, is the minimum degree of a real polynomial $p$ such that $\lVert f - p \rVert_\infty \le \epsilon$. 
Secondly, for a matrix $F \in \mathbb{R}^{m \times n}$ and a parameter $\delta \ge 0$, the $\delta$-approximate rank of $F$, denoted $\operatorname{rk}_\delta(F)$, is defined as:
\[
\operatorname{rk}_\delta(F) = \min \{ \operatorname{rk}(A) : \lVert F - A \rVert_\infty \le \delta \}
\]

Our main result is a lower bound, showing that such a pair of mappings $(F_q,F_{doc})$ must satisfy $D = (\eps^2 m)^{\Omega(1/\eps)}$. Specifically, we prove the following. 

\begin{theorem}
\label{thm:main-intro}
    Fix any accuracy parameter $\eps \in \left(0,\frac{1}{12}\right)$ and point cloud size $m \geq \frac{1}{\eps^2}$, and let $d=2m$. Then there exists a set of unit vectors $\{x_i\}_{i \in [n_1]} \subset \cS^{d-1}$ and point clouds $\{Y_j\}_{j \in [n_2]} \subset \cS_m^{d-1}$ such that the matrix $M \in \R^{n_1 \times n_2}$ defined by $M_{i,j} = \max_{y \in Y_j} \langle x_i , y\rangle$ satisfies:
    \[\operatorname{rk}_\epsilon(M) \geq (\eps^2 m)^{\Omega(1/\eps)}\]
    where $n_1 = (m/k)^k \cdot 2^k$ for $k=\lfloor \frac{1}{64\eps^2}\rfloor$, and $n_2 = 2^m$. 
\end{theorem}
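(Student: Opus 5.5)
The plan is to realize, up to a scalar factor, Sherstov's pattern matrix of $\nand_k$ (equivalently $\mathrm{OR}_k$ on complemented inputs) as a MAX-IP matrix. Then Sherstov's approximate-rank bound for pattern matrices, together with the $\Omega(\sqrt{k})$ approximate degree of $\nand_k$, gives the result. Set $k=\lfloor \frac{1}{64\eps^2}\rfloor$. Assume for now that $k$ divides $m$; otherwise use only the first $k\lfloor m/k\rfloor$ coordinates, which changes only constants. Split $[m]$ into $k$ blocks $B_1,\dots,B_k$ of size $m/k$.

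\emph{Construction.} Documents are indexed by $z\in\{0,1\}^m$, giving $n_2=2^m$. Set
\[
Y_z=\{e_{\ell+m z_\ell}:\ell\in[m]\}\subset\cS^{2m-1},
\]
so $Y_z$ contains exactly $m$ distinct unit vectors. Queries are indexed by pairs $(V,w)$, where $V=(v_1,\dots,v_k)$ with $v_b\in B_b$ and $w\in\{0,1\}^k$, giving $n_1=(m/k)^k2^k$. Set
\[
x_{V,w}=\frac{1}{\sqrt k}\sum_{b=1}^k e_{v_b+m(1-w_b)} .
\]
Since the $v_b$ are distinct, this is a unit vector. The only vectors of $Y_z$ that can have nonzero inner product with $x_{V,w}$ are those $e_{\ell+mz_\ell}$ with $\ell=v_b$ for some $b$. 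Such a vector has inner product $1/\sqrt k$ exactly when $z_{v_b}=1-w_b$, i.e.\ when $z_{v_b}\oplus w_b=1$, and $0$ otherwise. Hence
\[
M_{(V,w),z}=\tfrac{1}{\sqrt k}\,\mathrm{OR}_k\big(z|_V\oplus w\big)=\tfrac{1}{\sqrt k}\,\nand_k\big(\overline{z|_V\oplus w}\big).
\]
The right-hand side is $\frac1{\sqrt k}$ times the $(m,k,f)$-pattern matrix $F$ of $f(u)=\nand_k(\bar u)$, and $f$ has the same approximate degree as $\nand_k$. Scaling gives $\operatorname{rk}_\eps(M)=\operatorname{rk}_{\eps\sqrt k}(F)$, and $\eps\sqrt k\le 1/8$ by the choice of $k$.

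\emph{Rank bound.} Invoke Sherstov's theorem: for the $(n,t,f)$-pattern matrix $F$ and $0\le\delta<\gamma$,
\[
\operatorname{rk}_\delta(F)\ge\Big(\frac{\gamma-\delta}{1+\delta}\Big)^2\Big(\frac{n}{t}\Big)^{\deg_\gamma(f)} .
\]
Take $\gamma=1/4$, $\delta=1/8$, $n=m$ and $t=k$. This gives $\operatorname{rk}_\eps(M)\ge c\,(m/k)^{\deg_{1/4}(\nand_k)}$ for an absolute constant $c>0$. Since $m/k\ge 64\eps^2 m\ge 64$ under the hypothesis $m\ge 1/\eps^2$, it suffices to show $\deg_{1/4}(\nand_k)=\Omega(\sqrt k)=\Omega(1/\eps)$. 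This yields $(\eps^2m)^{\Omega(1/\eps)}$, and the constant $c$ is absorbed because $m/k\ge 64$.

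\emph{Approximate degree of $\nand_k$.} This is the step I expect to need the most care, though it is classical (Nisan--Szegedy). Let $p$ be a $1/4$-approximation to $\mathrm{OR}_k$ of degree $\Delta$, and symmetrize it over $S_k$ to obtain a univariate polynomial $q$ of degree $\le\Delta$ on $\{0,\dots,k\}$. It satisfies $|q(0)|\le 1/4$ and $|q(i)-1|\le 1/4$ for $i\ge1$, so $q(1)-q(0)\ge 1/2$ and hence $q$ has derivative at least $1/2$ somewhere in $[0,1]$. There are two cases for $h=\max_{[0,k]}|q|$. If $h\le 2$, the Ehlich--Zeller/Rivlin--Cheney bound (Markov's inequality combined with the bounded values at integer points) gives $\Delta=\Omega(\sqrt{k/(1+\text{slope}^{-1})})=\Omega(\sqrt k)$. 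If $h>2$, the polynomial must jump by $\Omega(h)$ between adjacent integers, so its derivative is $\Omega(h)$ somewhere, and Markov's inequality again forces $\Delta=\Omega(\sqrt k)$. Combining this with the previous paragraph completes the proof. The only remaining bookkeeping is divisibility and confirming the exact constants in Sherstov's theorem.
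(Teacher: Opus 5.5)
Your argument is correct and has the same skeleton as the paper's. Your instance is the paper's with each document bit complemented ($z=\bar y$), so the similarity matrix is $\frac{1}{\sqrt k}$ times the pattern matrix of $\mathrm{OR}_k(u)=\nand_k(\bar u)$ rather than of $\nand_k$. You then combine Sherstov's bound with an $\Omega(\sqrt k)$ approximate-degree bound, as the paper does. Your proof differs from the paper's in two local steps.

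\emph{Applying Sherstov's bound.} You apply Sherstov's theorem directly to the $\{0,1\}$-valued pattern matrix, with $\delta=1/8$ and $\gamma=1/4$. The paper instead passes to the $\pm1$-valued matrix $2\sqrt k\,M^T-J$ via Proposition~\ref{prop:approx-rank}, paying an additive $1$ in rank and a factor $2$ in the error. Theorem~\ref{thm:pattern-mat} is stated only for $f:\{0,1\}^k\to\{-1,+1\}$, so you need to say why your use is legitimate. Sherstov's proof uses the range of $f$ only to bound $\|A\|_\infty\le 1+\delta$ for the approximant $A$; this bound enters through $\|A\|_F$ in the trace-norm step. The dual witness for $\deg_\gamma(f)$ exists for any real-valued $f$. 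So the bound holds verbatim for $f$ with values in $[-1,1]$. Alternatively, you can do the paper's shift. Using $\deg_{1/2}(2\,\mathrm{OR}_k-1)=\deg_{1/4}(\mathrm{OR}_k)$, it gives $\operatorname{rk}_{1/8}(F)\ge \frac{1}{25}(m/k)^{\deg_{1/4}(\mathrm{OR}_k)}-1$, which is equally sufficient.

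\emph{The degree bound.} You prove $\deg_{1/4}(\mathrm{OR}_k)=\Omega(\sqrt k)$ from scratch by Minsky--Papert symmetrization plus Markov's inequality. This is the Nisan--Szegedy argument, and your two cases $h\le 2$ and $h>2$ do yield $\Delta\ge\sqrt{k/8}$. The $h>2$ case should compare the maximizer with its nearest integer, not two adjacent integers. The paper instead cites Paturi's theorem for symmetric functions and computes $\Gamma(\nand_k)=k-1$.

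\emph{What each route buys.} Your route is self-contained, gives explicit constants, and avoids the rank-one correction. The paper's route uses Sherstov's theorem exactly as stated and relies on a shorter but heavier black box for the degree.
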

Since the approximate rank of $M$ is precisely equal to the minimum dimension $D$ of vectors $a_1,\dots,a_{n_1},b_1,\dots,b_{n_2}$ satisfying $\|AB^T - M\|_\infty \leq \eps$, where the rows of $A \in \R^{n_1 \times D},B \in \R^{n_2 \times D}$ are given by the vectors $a_i,b_j$ respectively, Theorem \ref{thm:main-intro} immediately implies Theorem \ref{thm:main}. Thus, in what follows we will prove Theorem \ref{thm:main-intro}.
 

\subsection{The Pattern Matrix Method}
Our proof of Theorem \ref{thm:main-intro} will employ the Pattern Matrix Method of Sherstov \cite{sherstov2008pattern}. Specifically, 
we proceed by constructing a hard instance of point clouds such that the query-document maximum inner product similarity matrix is precisely equal to the \textit{pattern matrix} of the boolean $\nand_k$ function. We begin by introducing the pattern matrix.

   Fix integers $k,n\geq1$ with $k<n$ and $k|n$, and partition $[n]$ into $k$ equally sized blocks $B_1,\dots,B_{k}$, where 
   \[B_i = \left\{(i-1) \cdot\frac{n}{k}+1 , \dots, i \cdot \frac{n}{k} \right\} \]
   Next, let $\cV(n,k)$ denote the family of subsets $V \subset [n]$ that have exactly one element from each of the blocks $B_i$. Clearly $|V| = k$ for each $ V \in \cV(n,k)$, and moreover $|\cV(n,k)| = (n/k)^k$. Next, for a bit string $x \in \{0,1\}^n$ and $V \in \cV(n,k)$, define the projection of $x$ onto $V$ by $x|_V = (x_{i_1}, x_{i_2}, \dots, x_{i_k})$,
   where $i_1 < i_2 < \dots < i_k$ are the elements of $V$. Given this, we can now define the pattern matrix.

   \begin{definition}[Pattern Matrix \cite{sherstov2008pattern}]
       Fix any function $f:\{0,1\}^k \to \R$. Then the $(n,k,f)$ Pattern matrix is the matrix $A$ given by:
       \[ A = \Big[ f\big(w \oplus x|_V \big) \Big]_{x \in \{0,1\}^n , (V,w) \in \cV(n,k) \times \{0,1\}^k}    \]
   \end{definition}
   Thus, the $(n,k,f)$ pattern matrix $A$ has $2^n$ rows and $2^k \cdot (n/k)^k$ columns, where the rows are indexed via bitstrings $x \in \{0,1\}^n$ and the columns are indexed by a pair $(V,w)$ where $V \in \cV(n,k)$ and $w \in \{0,1\}^k$.
One remarkable property of pattern matrices demonstrated in \cite{sherstov2008pattern} is that one can lower bound the \emph{approximate rank} of a $(n,k,f)$ pattern matrix as a function of the $\eps$-approximate degree of a boolean function $f$ mapping into $\{-1,1\}$. 

\begin{theorem}[Theorem 8.1 \cite{sherstov2008pattern}]
\label{thm:pattern-mat}
    Let $F$ be the $(n, k, f)$-pattern matrix, where $f: \{0,1\}^k \to \{-1, +1\}$ is given. Then for every $\epsilon \in [0,1)$ and every $\delta \in [0, \epsilon]$,
\[
\operatorname{rk}_\delta(F) \ge \left( \frac{\epsilon - \delta}{1 + \delta} \right)^2 \left(\frac{n}{k}\right)^{\deg_\epsilon(f)}
\]
\end{theorem}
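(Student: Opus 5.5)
The plan is the standard duality argument: a dual witness for high approximate degree is lifted into a matrix with small spectral norm that correlates well with $F$. Let $d=\deg_\epsilon(f)$ and $s = 2^{n+k}(n/k)^k$, the number of entries of $F$.

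\textbf{Step 1 (dual polynomial).} Approximate degree is an $\ell_\infty$ linear program, so LP duality gives $\psi:\{0,1\}^k\to\R$ with three properties:
\[\sum_z|\psi(z)|=1,\qquad \sum_z\psi(z)f(z)\ge\epsilon,\qquad \hat\psi(S)=0 \text{ for all } |S|<d.\]
The last condition says $\psi$ is orthogonal to every polynomial of degree less than $d$. From the first condition, every Fourier coefficient satisfies $|\hat\psi(S)|\le 2^{-k}$, using the normalization $\hat\psi(S)=2^{-k}\sum_z\psi(z)\chi_S(z)$.

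\textbf{Step 2 (lift $\psi$ to a matrix).} Define $\Psi$ to be the $(n,k,\psi)$ pattern matrix scaled by $1/(2^n(n/k)^k)$. For each fixed $V$, as $(x,w)$ ranges, every $z\in\{0,1\}^k$ occurs exactly $2^n$ times as $w\oplus x|_V$. Hence:
\begin{itemize}
\item the entries of $\Psi$ have total absolute value exactly $1$;
\item $\langle F,\Psi\rangle=\sum_z\psi(z)f(z)\ge\epsilon$.
\end{itemize}

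\textbf{Step 3 (spectral norm of pattern matrices; the main obstacle).} Expand $\psi=\sum_S\hat\psi(S)\chi_S$, so the pattern matrix of $\psi$ is $\sum_S\hat\psi(S)A_S$, where $A_S$ is the pattern matrix of the character $\chi_S$. Note $\chi_S(w\oplus x|_V)=\chi_S(w)\,\chi_{V(S)}(x)$, where $V(S)\subseteq[n]$ picks the coordinate of $V$ in each block of $S$. The plan is to verify two facts:
\begin{itemize}
\item $A_S A_T^{\top}=0$ and $A_S^{\top}A_T=0$ for $S\ne T$, because the $w$-sums (respectively the $x$-sums) of products of distinct characters vanish;
\item $\|A_S\|=\sqrt{2^{n+k}(n/k)^k}\,(k/n)^{|S|/2}$. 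This comes from computing $A_S^{\top}A_S$ and counting, for a random $V'$, the chance that $V'(S)=V(S)$, which is $(k/n)^{|S|}$.
\end{itemize}
Mutual orthogonality of the pieces gives $\|\sum_S\hat\psi(S)A_S\|=\max_S|\hat\psi(S)|\,\|A_S\|$. Only $|S|\ge d$ contributes, so
\[\|\Psi\|\le \frac{\sqrt{s}\,2^{-k}(k/n)^{d/2}}{2^n(n/k)^k}=\frac{(k/n)^{d/2}}{\sqrt{s}}.\]
This exact spectral computation is the delicate part, and it is where I expect the most care to be needed.

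\textbf{Step 4 (conclude).} Let $A$ satisfy $\|F-A\|_\infty\le\delta$ and have rank $r$. Since the entries of $\Psi$ have total absolute value $1$,
\[\langle A,\Psi\rangle\ge\langle F,\Psi\rangle-\delta\ge\epsilon-\delta.\]
On the other hand, since $|A_{ij}|\le1+\delta$,
\[\langle A,\Psi\rangle\le\|A\|_\Sigma\|\Psi\|\le\sqrt r\,\|A\|_F\,\|\Psi\|\le\sqrt r\,(1+\delta)\sqrt s\cdot\frac{(k/n)^{d/2}}{\sqrt s}.\]
Here $\|A\|_\Sigma$ is the trace norm, and $\|A\|_\Sigma\le\sqrt r\,\|A\|_F$ by Cauchy--Schwarz over the $r$ nonzero singular values. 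Rearranging gives $r\ge\left(\frac{\epsilon-\delta}{1+\delta}\right)^2(n/k)^{d}$, which is the claimed bound.
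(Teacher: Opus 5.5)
Your proposal is correct and is essentially Sherstov's original proof. The paper only cites this theorem and does not reprove it. In that proof, the LP-dual witness for $\deg_\epsilon(f)$ is lifted to the normalized pattern matrix $\Psi$, and $\|\Psi\|\le (k/n)^{d/2}/\sqrt{s}$ comes from splitting into the mutually orthogonal character pieces $A_S$ (with $A_S^\top A_T=0$ because $V(S)$ determines $S$ through the block structure). The rank bound then follows from $\epsilon-\delta\le\langle A,\Psi\rangle\le\sqrt{r}\,\|A\|_F\,\|\Psi\|$; Sherstov splits this last step into an approximate trace-norm bound followed by $\|A\|_\Sigma\le\sqrt{\operatorname{rk}A}\,\|A\|_F$, which is the same computation.
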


\textbf{Roadmap:} We first construct a hard family of query vectors and document point clouds $\{x_i\}, \{Y_j\}$ such that their pair-wise Max-IP similarity matrix exactly realizes the $(m,k,f)$ pattern matrix, where $f = \nand_k$ is the NotAnd function over $k$ bits, and $k = \Theta(\frac{1}{\eps^2})$. Our main result will follow after lower bounding the approximate degree of $\nand_k$.

\section{Construction of the Hard Instance}

We now describe the construction of our hard instance.  
Fix any $\eps \in \left(0,\frac{1}{12}\right)$ and $m \geq \frac{1}{\eps^2}$, and set $k = \lfloor 1 / (64\epsilon^2) \rfloor$, noting that $ k \geq 2$. We can assume WLOG that $m$ is a multiple of $k$, since if it is not we can round $m$ down to the next multiple of $k$ in our construction without affecting the asymptotics of the main result. Specifically, we will begin by proving the following Lemma:

\begin{lemma}[Hard Instance Construction]
\label{lem:hard-instance}
Fix any $\epsilon \in \left(0,\frac{1}{12}\right)$ and set  $k = \lfloor 1/(64\epsilon^2) \rfloor$, and fix any integer $m \ge 1/\epsilon^2$ that is a multiple of $k$. There exists a set of $n_1 = 2^k(m/k)^k$ query unit vectors $\{x_i\}_{i=1}^{n_1} \subset \mathbb{R}^{2m}$ and a set of $n_2 = 2^m$ document point clouds $\{X_j\}_{j=1}^{n_2} \subset \mathbb{R}^{2m}$, each containing exactly $m$ unit vectors, such that the Chamfer similarity matrix $M \in \mathbb{R}^{n_1 \times n_2}$, defined by $M_{i,j} = \max_{p \in X_j} \langle x_i, p \rangle$, satisfies:
$$\sqrt{k} \cdot M^T = F_{\text{NAND}}$$
where $F_{\text{NAND}}$ is the $(m, k, \text{NAND}_k)$ pattern matrix and $ \nand_k: \{0,1\}^k \to \{0,1\}$ is the Not-And function on $k$ bits. 
\end{lemma}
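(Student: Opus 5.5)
Our plan is to index the queries by columns $(V,w)\in\cV(m,k)\times\{0,1\}^k$ of the pattern matrix and the documents by rows $z\in\{0,1\}^m$. We work in $\R^{2m}$ with standard basis vectors $e_{t,b}$ for $t\in[m]$, $b\in\{0,1\}$; coordinate $t$ is encoded by two basis vectors according to its bit value. For a document $z\in\{0,1\}^m$ we take
\[ X_z=\{e_{t,z_t} : t\in[m]\}, \]
which consists of exactly $m$ orthonormal unit vectors. For a query $(V,w)$ with $V=\{i_1<\dots<i_k\}$ we take
\[ x_{V,w}=\frac{1}{\sqrt{k}}\sum_{\ell=1}^k e_{i_\ell,\,1-w_\ell}, \]
which is a unit vector since the $i_\ell$ are distinct.

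The key step is computing $\max_{p\in X_z}\langle x_{V,w},p\rangle$. The inner product of $x_{V,w}$ with $e_{t,z_t}$ is $1/\sqrt{k}$ exactly when $t=i_\ell$ for some $\ell$ and $z_{i_\ell}=1-w_\ell$, i.e.\ when $(w\oplus z|_V)_\ell=1$; otherwise it is $0$. Hence
\[ \max_{p\in X_z}\langle x_{V,w},p\rangle=\frac{1}{\sqrt{k}}\cdot \mathbf{1}\bigl[\exists \ell:\ (w\oplus z|_V)_\ell=1\bigr]=\frac{1}{\sqrt{k}}\operatorname{OR}_k(w\oplus z|_V). \]
This yields the OR pattern rather than NAND, so we fix the encoding. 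Since $\nand_k(u)=\operatorname{OR}_k(\bar u)$ and $\overline{w\oplus z|_V}=\bar w\oplus z|_V$, we instead use $e_{i_\ell,w_\ell}$ in the query. The entry then becomes $\frac{1}{\sqrt k}\operatorname{OR}(\bar w\oplus z|_V)=\frac{1}{\sqrt k}\nand_k(w\oplus z|_V)$.

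Therefore $\sqrt{k}\,M_{(V,w),z}=\nand_k(w\oplus z|_V)=(F_{\nand})_{z,(V,w)}$, which gives $\sqrt{k}M^T=F_{\nand}$. The counts are $n_1=|\cV(m,k)|\cdot 2^k=(m/k)^k2^k$ and $n_2=2^m$, and all vectors lie in $\R^{2m}$. Distinct $z$ give distinct sets $X_z$, and each query vector is distinct as well.

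The only delicate points are the bit-flip bookkeeping in getting NAND rather than OR, and checking that the maximum is attained correctly: entries are nonnegative, so the max is $0$ exactly when no term hits. Beyond that the argument is routine verification. The conditions $\eps<1/12$ and $m\ge 1/\eps^2$ are used only to ensure $k\ge2$ and $k\le m$.
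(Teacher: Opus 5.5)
Your proposal is correct: after the bit-flip fix, your final construction (queries $\frac{1}{\sqrt{k}}\sum_{\ell} e_{i_\ell,w_\ell}$ and documents $X_z=\{e_{t,z_t}\}$) is exactly the paper's, under the identification $e_{t,1}=e_t$ and $e_{t,0}=e_{m+t}$. Your verification that the maximum inner product equals $\frac{1}{\sqrt{k}}\nand_k(w\oplus z|_V)$ is the same entrywise computation the paper carries out.
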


\begin{proof}

We begin by following the setup of the pattern matrix: we partition $[m] = \{1,2,\dots,m\}$ into $k$ contiguous blocks $B_1, \dots, B_k$, each of size $m/k$, namely

 \[B_i = \left\{(i-1) \cdot\frac{m}{k}+1 , \dots, i \cdot \frac{m}{k} \right\} \]

 As in the earlier definition of the Pattern Matrix, we write $\cV(m,k)$ to denote the family of subsets $V \subset [m]$ that have exactly one element from each of the blocks $B_i$. As in the Lemma statement, we set the ambient dimension to be $d = 2m$, so that our query vectors and document point clouds will live in $\R^{2m}$. We proceed by defining the set of query and document point clouds.

 \textbf{Construction of Queries.}  A query $x_i$ is defined by a pair $(V,w)$ where $V \in \cV(m,k)$ and $w \in \{0,1\}^k$. Specifically, given any $V= (v_1, \dots, v_k)  \in \cV(m,k)$ and $w \in \{0, 1\}^k$, we define the vector:
    \[ 
    q_{V,w} = \frac{1}{\sqrt{k}} \sum_{j=1}^k \Big( w_j e_{v_j} + (1-w_j) e_{m+v_j} \Big) 
    \]

    where $e_i \in \R^{2m}$ is the $i$-th standard basis vector.
   Observe that $q_{V,w}$ has exactly $k$ non-zero entries, each whose value is $1/\sqrt{k}$, thus $\|q_{V,w}\|_2 = 1$. Moreover, there are exactly $n_1= 2^k (m/k)^k$ query vectors.

    \textbf{Construction of Documents.} A document set $P$ is defined by a vector $y \in \{0,1\}^m$. Specifically, for every boolean string $y \in \{0, 1\}^m$, define the set $P_y = \{ p_1(y), \dots, p_m(y) \}$, where $p_i(y) = y_i e_i + (1-y_i) e_{m+i}$. Notice that each $p_i(y)$ is a standard basis vector, so $\|p_i\|_2 = 1$. Moreover, the number of vectors in the document point cloud is exactly $m$, and there are exactly $n_2 = 2^m$ document point clouds.

\textbf{The Similarity Matrix.} Now we analyze the form of the $n_1 \times n_2$ similarity matrix $M \in \R^{n_1 \times n_2}$. Fix any query vector $x_{(V,w)}$, given by $(V,w) \in \cV(m,k) \times \{0,1\}^k$, and fix any document point cloud $P_y$, given by $y \in \{0,1\}^m$. Now for any $i \in [m]$, note that the inner product between $x_{(V,w)}$ and the $i$-th vector $p_i(y) \in P_y$ is $0$ if $i \notin V$. Otherwise, if $i = v_j \in V$, then the inner product is precisely $\frac{1}{\sqrt{k}} [w_j y_{v_j} + (1-w_j)(1-y_{v_j})]$.  Observe that the bracketed term equals $1$ if $w_j = y_{v_j}$, and $0$ if $w_j \neq y_{v_j}$. Thus, in summary:

\[  \langle x_{(V,w)} ,p_i(y)  \rangle = \begin{cases}\
  \frac{1}{\sqrt{k}}  & \text{if }i = v_j \in V \text{ and } w_j = y_{v_j}\\
    0 & \text{otherwise} \\
\end{cases} \]
    Therefore, the Max-IP (and thus Chamfer) similarity evaluates to:
    \[ 
    M_{(V,w), y} = \max_{p \in P_y} \langle q_{V,w}, p \rangle = \frac{1}{\sqrt{k}} \max_{j=1}^k I(w_j = y_{v_j}) 
    \]
    Where $I(E) \in \{0,1\}$ is the indicator function for a boolean event $E$. 
    Observe that $\max_{j=1}^k I(w_j = y_{v_j}) = 1$ if and only if at least one index $j \in [k]$ satisfies  $w_j = y_{v_j}$, or equivalently if not all $j \in [k]$ satisfy  $w_j \neq y_{v_j}$. Thus, if we consider the vector $w \oplus y|_V \in \{0,1\}^k$, it follows that $\max_{j=1}^k I(w_j = y_{v_j}) = 1$ if and only if not all of the bits in $w \oplus y|_V$ are $1$ -- i.e. if the $\text{NAND}_k(w \oplus y|_V) = 1$, where $\text{NAND}_k$ is the $k$-bit Not-AND function, defined as $\text{NAND}_k(x) = 1-x_1\cdot x_2 \cdots x_k$ for a boolean vector $x \in \{0,1\}^k$.
    
    We summarize the above discussion as follows: 
    \begin{equation*}
        \begin{split}
          \sqrt{k}  M_{(V,w), y} = \max_{j=1}^k I(w_j = y_{v_j}) &= 1 - \min_{j=1}^k I(w_j \neq y_{v_j}) \\
            &= 1 - \min_{j=1}^k (w_j \oplus y_{v_j}) \\
            &= \text{NAND}_k(w \oplus y|_V)
        \end{split}
    \end{equation*}
    Thus, the Chamfer similarity matrix $M$ precisely encodes $\frac{1}{\sqrt{k}} \text{NAND}_k(w \oplus y|_V)$. So, by definition, $\sqrt{k} \cdot M^T$ is exactly equal to the Pattern Matrix of the $\text{NAND}_k:\{0,1\}^k \to \{0,1\}$ function, which completes the proof of the Lemma.
\end{proof}    
   Equipped with Lemma \ref{lem:hard-instance}, we would now like to apply Theorem \ref{thm:pattern-mat} to lower bound  the approximate rank of the pattern matrix $F_{NAND}$. However, Theorem \ref{thm:pattern-mat} requires that the function in question maps to $\{-1,1\}$ instead of $\{0,1\}$. So we instead consider the natural function $\text{NAND}_k^-:\{0,1\}^k \to \{-1,1\}$,  defined as $\text{NAND}_k^-(x) = 2\cdot\text{NAND}_k(x) - 1$. It follows that if $M$ is the Chamfer similarity matrix from Lemma \ref{lem:hard-instance}, then $(2 \sqrt{k} \cdot M^T - J)$ is precisely the $(m,k,\text{NAND}_k^-)$ pattern matrix, where $J \in \R^{n_2 \times n_1}$ is the all ones matrix. We observe that this transformation does not significantly affect the approximate rank of the matrix.

    \begin{proposition}\label{prop:approx-rank}
        Fix any matrix $M \in \{0,1/\sqrt{k}\}^{n_1 \times n_2}$ for any $k \geq 1$, and let $M' = 2 \sqrt{k} \cdot M^T - J$.  
        \[\operatorname{rk}_{\frac{1}{8\sqrt{k}}}(M) \geq \operatorname{rk}_{1/4}(M')-1 \]
    \end{proposition}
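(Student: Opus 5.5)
Take a matrix $A$ that witnesses $\operatorname{rk}_{\frac{1}{8\sqrt{k}}}(M)$. That is, $\operatorname{rk}(A) = \operatorname{rk}_{\frac{1}{8\sqrt{k}}}(M)$ and $\|M - A\|_\infty \le \frac{1}{8\sqrt{k}}$. From $A$ we build an approximation of $M'$ by applying the same affine map that sends $M$ to $M'$. Set
\[ A' = 2\sqrt{k}\cdot A^T - J . \]
Then we bound its entrywise error and its rank.

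\textbf{Step 1 (error).} Transposition does not change the entrywise sup norm, and the $-J$ terms cancel:
\[ M' - A' = 2\sqrt{k}\,(M^T - A^T), \qquad \|M' - A'\|_\infty = 2\sqrt{k}\,\|M - A\|_\infty \le 2\sqrt{k}\cdot \frac{1}{8\sqrt{k}} = \frac14 . \]
So $A'$ is a valid candidate in the definition of $\operatorname{rk}_{1/4}(M')$.

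\textbf{Step 2 (rank).} Transposing and scaling by the nonzero constant $2\sqrt{k}$ preserve rank. The all-ones matrix $J$ has rank $1$. Subadditivity of rank then gives
\[ \operatorname{rk}(A') \le \operatorname{rk}(A) + 1 . \]

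\textbf{Conclusion.} Combining the two steps,
\[ \operatorname{rk}_{1/4}(M') \le \operatorname{rk}(A') \le \operatorname{rk}_{\frac{1}{8\sqrt{k}}}(M) + 1 , \]
which rearranges to the claim. Both steps are immediate, so I expect no real obstacle. The only point needing care is matching the error scaling exactly, and the factor $2\sqrt{k}\cdot\frac{1}{8\sqrt{k}} = \frac14$ does so. The hypothesis $M \in \{0, 1/\sqrt{k}\}^{n_1 \times n_2}$ is not actually used; it only ensures that $M'$ is a $\pm1$ matrix for the later application of Theorem \ref{thm:pattern-mat}.
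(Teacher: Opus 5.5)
Your proof is correct and matches the paper's argument essentially line for line. Like the paper, you take a witness $A$ for $\operatorname{rk}_{\frac{1}{8\sqrt{k}}}(M)$, form $2\sqrt{k}A^T - J$, show the entrywise error is at most $1/4$, and bound its rank by $\operatorname{rank}(A)+1$ using subadditivity (your remark that the $\{0,1/\sqrt{k}\}$ hypothesis is unused here is also accurate).
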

    \begin{proof}
        Fix any matrix $A \in \R^{n_1 \times n_2}$ with rank $\operatorname{rk}_{\frac{1}{8\sqrt{k}}}(M)$ such that $\|A-M\|_\infty \leq \frac{1}{8\sqrt{k}}$. Then clearly $\|2\sqrt{k}A^T-2\sqrt{k}M^T\|_\infty \leq \frac{1}{4}$, and moreover $\|(2\sqrt{k}A^T-J)-M'\|_\infty \leq \frac{1}{4}$. Further, $\operatorname{rank}(2\sqrt{k}A^T) = \operatorname{rank}(A)$, since scaling and transposing do not affect rank. Finally, since $J$ is a rank one matrix, by subadditivity of rank we have 
         \[ \operatorname{rk}_{1/4}(M') \leq \operatorname{rank}(2\sqrt{k}A^T-J) \leq \operatorname{rank}(A)+1=\operatorname{rk}_{\frac{1}{8\sqrt{k}}}(M) + 1\]
     which completes the proof.
    \end{proof}

By Proposition \ref{prop:approx-rank}, it will suffice to lower bound the $1/4$ approximate rank of the $(m,k,\text{NAND}_k^-)$ pattern matrix $F$. By Theorem \ref{thm:pattern-mat}, setting $\delta = 1/4$ and $\eps = 1/3$ we can lower bound this as follows:

\[\operatorname{rk}_{1/4}(F) \ge O(1) \cdot \left(\frac{m}{k}\right)^{\deg_{1/3}(\text{NAND}_k^-)}\]
Thus, it will suffice to lower bound the approximate degree \\$\deg_{1/3}(\text{NAND}_k^-)$, which we do in Section \ref{sec:approx-degree}. Specifically, we prove:

 \begin{lemma}
 \label{lem:approx-degree-main}
The $1/3$-approximate degree of $\text{NAND}_k^-$ is $\Omega(\sqrt{k})$.
\end{lemma}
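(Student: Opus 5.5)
The plan is the standard symmetrization argument of Minsky--Papert combined with the Markov brothers' inequality (the Nisan--Szegedy bound for OR/AND). Suppose $p:\{0,1\}^k\to\R$ has degree $t$ and satisfies $|p(x)-\text{NAND}_k^-(x)|\le 1/3$ for all $x$. Since $\text{NAND}_k^-(x) = 1-2\prod_i x_i$ is symmetric, we may replace $p$ by its symmetrization $p^{\mathrm{sym}}(x)=\frac{1}{k!}\sum_{\sigma\in S_k} p(\sigma x)$. This is still a $1/3$-approximation of degree at most $t$.

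By Minsky--Papert, there is a univariate polynomial $q$ of degree at most $t$ with $p^{\mathrm{sym}}(x)=q(|x|)$, where $|x|$ is the Hamming weight. The approximation guarantee then translates into the following conditions on integer points: $q(j)\in[2/3,4/3]$ for $j=0,1,\dots,k-1$, and $q(k)\in[-4/3,-2/3]$. In particular $q$ drops by at least $4/3$ between the integers $k-1$ and $k$, so by the mean value theorem there is some $\xi\in[k-1,k]$ with $|q'(\xi)|\ge 4/3$.

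The main obstacle is that $q$ is only controlled at integer points, whereas Markov's inequality needs a bound on the whole interval $[0,k]$. I will handle this with the usual Ehlich--Zeller/Nisan--Szegedy trick. Let $c=\max_{z\in[0,k]}|q(z)|$ and $\Lambda=\max_{z\in[0,k]}|q'(z)|$. Markov's inequality, rescaled to the interval $[0,k]$, gives $\Lambda\le \frac{2t^2}{k}\,c$. On the other hand, every real $z\in[0,k]$ lies within distance $1/2$ of an integer $j$ with $|q(j)|\le 4/3$, so $c\le 4/3+\Lambda/2$.

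Combining these two bounds, first consider the case $t^2\le k/2$. Then $\Lambda\le \frac{2t^2}{k}\bigl(\frac{4}{3}+\frac{\Lambda}{2}\bigr)$ with $\frac{t^2}{k}\le\frac12$, which gives $\Lambda\le \frac{2t^2}{k}\cdot\frac43+\frac{\Lambda}{2}$, hence $\Lambda\le \frac{16t^2}{3k}$. Together with $\Lambda\ge 4/3$ this yields $t^2\ge k/4$. In the remaining case, $t^2>k/2$ holds outright. Either way $t\ge\sqrt{k}/2$, so $\deg_{1/3}(\text{NAND}_k^-)=\Omega(\sqrt{k})$. This is the lower bound claimed in Lemma~\ref{lem:approx-degree-main}, and plugging it into the bound derived above from Theorem~\ref{thm:pattern-mat} and Proposition~\ref{prop:approx-rank} completes the main argument.
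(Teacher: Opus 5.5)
Your proof is correct, but it takes a different route from the paper's.

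The paper does not argue about polynomials directly. It turns a $1/3$-approximation $p$ of $\text{NAND}_k^-$ into the $1/6$-approximation $(p+1)/2$ of the $\{0,1\}$-valued $\text{NAND}_k$. It then applies Paturi's theorem for symmetric Boolean functions as a black box, computing the jump location $\Gamma(\text{NAND}_k)=k-1$ to get $\deg_{1/3}(\text{NAND}_k)=\Theta(\sqrt{k})$.

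You instead prove directly the special case of Paturi's theorem that is needed here, a single jump at the end of the Hamming-weight range. You use symmetrization, Minsky--Papert, and the Ehlich--Zeller/Markov argument of Nisan--Szegedy. You also work with the $\pm1$-valued function throughout, so no conversion to $\{0,1\}$ is needed. Your steps check out: the drop $q(k-1)-q(k)\ge 4/3$, the rescaled Markov bound $\Lambda\le 2t^2c/k$, the interpolation bound $c\le 4/3+\Lambda/2$, and the two-case analysis together give $t\ge\sqrt{k}/2$.

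The paper's route is shorter and also gives the matching upper bound $O(\sqrt{k})$, which is not needed here. However, it rests on a deep external theorem whose hidden constants depend on the error parameter. Your route is self-contained and elementary, and it produces an explicit constant. That would make the $\Omega(1/\epsilon)$ in the exponent of Theorem~\ref{thm:main-intro} explicit, roughly $\sqrt{k}/2\approx 1/(16\epsilon)$.
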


Given this Lemma, we are now ready to complete the proof of our main theorem.
\begin{proof}[Proof of Theorem \ref{thm:main-intro}]
    By Lemma \ref{lem:hard-instance}, we have our desired hard instance $\{x_i\}_{i=1}^{n_1} \subset \mathbb{R}^{2m}$ and $\{X_j\}_{j=1}^{n_2} \subset \mathbb{R}^{2m}$ with the property that the Chamfer similarity matrix, defined by $M_{i,j} = \max_{p \in X_j} \langle x_i, p \rangle$,  satisfies:
$\sqrt{k} \cdot M^T = F_{\text{NAND}}$ where $F_{\text{NAND}}$ is the $(m, k, \text{NAND}_k)$ pattern matrix.  To prove the Theorem, recall that our final goal is to lower bound $\operatorname{rk}_{\epsilon}(M)$ by $(\eps^2 m)^{\Omega(1/\eps)}$.

As discussed earlier, it follows that the matrix $M'=2\sqrt{k}M^T-J$ is the $(m,k,\text{NAND}_k^-)$ pattern matrix. By Proposition \ref{prop:approx-rank}, using our setting of $k = \lfloor 1/(64 \eps^2) \rfloor$ and the fact that $\operatorname{rk}_{\theta}$ can only decrease as you increase the parameter $\theta$, we have:
\[ \operatorname{rk}_{\epsilon}(M) \geq \operatorname{rk}_{\frac{1}{8\sqrt{k}}}(M) \geq \operatorname{rk}_{1/4}(M')-1  \]
Finally, putting together Theorem \ref{thm:pattern-mat} with $\delta = 1/4$ and $\eps = 1/3$, and Lemma \ref{lem:approx-degree-main} we have

\[\operatorname{rk}_{1/4}(M') \ge O(1) \cdot \left(\frac{n}{k}\right)^{\deg_{1/3}(\text{NAND}_k^-)} \geq \left(\frac{m}{k} \right)^{(\Omega\sqrt{k})} = (\eps^2 m)^{\Omega(1/\eps)} \]
which completes the proof.
\end{proof}

\section{The Approximate Degree of $NAND_k^-$}
\label{sec:approx-degree}
The main goal of this section is to prove Lemma \ref{lem:approx-degree-main}. To do this, we will first show that the approximate degree of the $k$-input $NAND_k$ function is $\Omega(\sqrt{k})$, from which the same bound will follow for the $NAND_k^-$ function. Recall that for any function $f: \{0,1\}^n \to \mathbb{R}$, the $\epsilon$-approximate degree of $f$, denoted $\deg_\epsilon(f)$, is the minimum degree of any real polynomial $p$ such that $\lVert f - p \rVert_\infty \le \epsilon$.  We use a result from Paturi which characterizes the approximate degree of symmetric boolean functions \cite{Paturi_1992_stoc}. 

\begin{definition}[Symmetric Boolean Function]
A boolean function $f: \{0,1\}^n \rightarrow \{0,1\}$ is called symmetric if its value only depends on the number of $1$'s in the input. 
Formally, $f(x_1, \ldots, x_n) = f_i \in \{0, 1\}$ for all $(x_1, \ldots, x_n)$ such that $\sum_{j=1}^n x_j = i$. 
\end{definition}

\begin{lemma} \label{lemma:nand_symmetric}
The function $NAND_k: \{0,1\}^k \rightarrow \{0,1\}$ is a symmetric boolean function.
\end{lemma}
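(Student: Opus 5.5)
The plan is to verify the definition of a symmetric boolean function directly from the formula $\nand_k(x) = 1 - x_1 x_2 \cdots x_k$ used in the proof of Lemma \ref{lem:hard-instance}. Fix $x \in \{0,1\}^k$ and write $i = \sum_{j=1}^k x_j$. Since every coordinate lies in $\{0,1\}$, the product $x_1 x_2 \cdots x_k$ equals $1$ exactly when every coordinate is $1$. Among inputs in $\{0,1\}^k$, that happens exactly when $i = k$. Otherwise some coordinate is $0$, the product is $0$, and this is exactly the case $i < k$.

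Define the sequence $f_0,\dots,f_k$ by $f_i = 1$ for $0 \le i \le k-1$ and $f_k = 0$. By the observation above, every $x$ with $\sum_j x_j = i$ satisfies $\nand_k(x) = f_i$. So the value of $\nand_k$ depends only on the number of ones in its input, which is exactly the definition. Equivalently, I could note that $\nand_k(x_{\sigma(1)},\dots,x_{\sigma(k)}) = \nand_k(x)$ for every permutation $\sigma$, because the product is commutative. Permutation invariance on the hypercube is the same as depending only on the Hamming weight, but the explicit $f_i$ description is cleaner and matches the stated definition.

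There is no real obstacle here. The only point needing care is the implicit step that ``product equals $1$'' is equivalent to ``Hamming weight equals $k$.'' This uses that the inputs are boolean: each $x_j \in \{0,1\}$, so $x_j = 1$ for all $j$ exactly when $\sum_j x_j = k$. The explicit sequence $f_0=\dots=f_{k-1}=1$, $f_k=0$ is also what the later application of Paturi's characterization will need, since it locates the single ``jump'' in the sequence at $i = k-1$.
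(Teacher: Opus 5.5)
Your proof is correct and follows essentially the same route as the paper: both observe that $NAND_k$ outputs $0$ exactly when all $k$ inputs are $1$, i.e., exactly when $\sum_{j=1}^k x_j = k$, so its value depends only on the number of ones in the input. Your explicit sequence $f_0=\dots=f_{k-1}=1$, $f_k=0$ is the same one the paper uses immediately afterward to compute $\Gamma(NAND_k)$.
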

\begin{proof}
By definition, a $k$-input NAND gate outputs $0$ if and only if all $k$ of its inputs are $1$, otherwise it outputs $1$.  Thus, the output is $0$ if and only if $\sum_{j=1}^k x_j=k$, so the function is symmetric. 
\end{proof}

To determine the approximate degree of a symmetric boolean function, Paturi defines a quantity $\Gamma(f)$ that measures the location of the ``jump'' in the function's value \cite{Paturi_1992_stoc}. 

\begin{definition}[Jump Location $\Gamma(f)$]
For a symmetric boolean function $f:\{0,1\}^n \to \{0,1\}$ on $n$ variables, where $f_i$ is the value of $f$ on inputs with exactly $i$ $1$'s, $\Gamma(f)$ is defined as:
\begin{equation}
\Gamma(f) = \min\{|2i - n + 1| : f_i \neq f_{i+1} \text{ and } 0 \le i \le n - 1\} %
\end{equation}
\end{definition}

Paturi's theorem relates this quantity to the approximate degree:

\begin{theorem}[Paturi 1992 \cite{Paturi_1992_stoc}] \label{thm:paturi}
Let $f:\{0,1\}^n \to \{0,1\}$ be any boolean valued, non-constant symmetric function on $n$ variables, and fix any constant $c \in (0,1/2)$. Then the $c$-approximate degree of $f$ is $\Theta(\sqrt{n(n - \Gamma(f))})$. 
\end{theorem}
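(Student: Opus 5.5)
The plan is to reduce Theorem~\ref{thm:paturi} to a univariate problem and then prove matching upper and lower bounds separately. Since $f$ is symmetric, Minsky--Papert symmetrization applies. If $p:\{0,1\}^n\to\R$ satisfies $\|f-p\|_\infty\le c$, then averaging $p$ over all permutations of the coordinates gives a symmetric multilinear polynomial of degree at most $\deg p$. Such a polynomial equals $q(x_1+\dots+x_n)$ for a univariate $q$ with $\deg q\le \deg p$ and $|q(i)-f_i|\le c$ for all $i\in\{0,\dots,n\}$. The converse direction is immediate by substituting $\sum_j x_j$ into $q$. Hence $\deg_c(f)$ equals the least degree of a univariate $q$ that is $c$-close to $(f_0,\dots,f_n)$ on the integers $0,\dots,n$, and both bounds are statements about such $q$.

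For the lower bound, let $t$ be the index achieving $\Gamma(f)$, so $f_t\ne f_{t+1}$ and $|2t-n+1|=\Gamma(f)$. Then $\min(t,n-t)\approx (n-\Gamma(f))/2$, and so $t(n-t)=\Theta(n(n-\Gamma(f)))$. Since $c<1/2$, we get $|q(t+1)-q(t)|\ge 1-2c$, and by the mean value theorem $|q'(\xi)|\ge 1-2c$ for some $\xi\in[t,t+1]$. Meanwhile $|q(i)|\le 1+c$ on every integer in $[0,n]$.

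\begin{enumerate}
\item First I would show that $q$ is bounded on the whole real interval $[0,n]$, say $|q|\le C$. I would invoke the Coppersmith--Rivlin / Ehlers--Zeller inequality: a degree-$d$ polynomial bounded by $B$ on the integers of $[0,n]$ is bounded by $O(B)$ on $[0,n]$ provided $d\le \alpha\sqrt{n}$ for a small constant $\alpha$. If instead $d>\alpha\sqrt n$, the case where $\min(t,n-t)=O(1)$ is already finished, because there $\sqrt{t(n-t)}=O(\sqrt n)$.
\item With the interval bound in hand, I would apply Bernstein's inequality on $[0,n]$ in the form $|q'(x)|\le \dfrac{d\cdot C}{\sqrt{x(n-x)}}$. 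At $x=\xi$ this yields $d\ge (1-2c)\sqrt{\xi(n-\xi)}/C=\Omega(\sqrt{t(n-t)})$.
\end{enumerate}

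The main obstacle is that step 1 only controls $q$ off the integers when $d\lesssim\sqrt n$, whereas the target degree can be as large as $n$. To get around this I would localize: restrict attention to the integer window $[t-r,t+r]$ with $r=\min(t,n-t)$, where Bernstein's bound is essentially $d\,C/r$. Proving that boundedness on integers propagates to the reals in this window for $d\le\alpha r\sqrt{n/r}$ requires a refined Ehlers--Zeller-type argument. Concretely, I would compare $q$ with a Chebyshev polynomial after the affine change of variables sending $[0,n]$ to $[-1,1]$. The integer points then have spacing about $\sqrt{r/n}/n$ in the angular variable near $t$. A case split on whether $\max_{[0,n]}|q|$ is attained near the integers or far from them gives the bound, and this is the delicate part of the argument.

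For the upper bound, I would construct $q$ from a rescaled Chebyshev polynomial. Take $T_d$ composed with the affine map that sends $\{0,\dots,n\}$ into $[-1,1]$ so that the jump region near $t$ lands just outside $[-1,1]$. With $d=O(\sqrt{t(n-t)})$, the resulting polynomial separates the integers on the two sides of $t$ by a constant factor. Amplifying with a constant-degree polynomial and combining the $O(1)$ jump points, of which there are at most constantly many near the boundary after restricting attention to the dominating one, gives a $c$-approximation of the required degree. This direction follows the standard Nisan--Szegedy style construction and should be routine. Only the lower bound is needed for Lemma~\ref{lem:approx-degree-main}: for $f=\nand_k$ we have $\Gamma=k-2$, so the theorem gives $\deg_c(\nand_k)=\Theta(\sqrt{k})$.
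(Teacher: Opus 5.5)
The paper does not prove this statement; it imports it from Paturi and only uses the case $n-\Gamma(f)=1$. Your symmetrization step is fine. The general lower bound, however, has a genuine gap exactly at the step you call delicate, and the localization you propose is the wrong one.

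The window $[t-r,t+r]$ with $r=\min(t,n-t)$ always reaches an endpoint of $[0,n]$. Integer-boundedness alone gives no $O(1)$ control within distance $\Theta(d^2/n)$ of an endpoint: that boundary layer is where the (tight) $e^{\Theta(d^2/n)}$ Coppersmith--Rivlin growth occurs. For $d=\alpha\sqrt{rn}$ the layer has width $\Theta(\alpha^2 r)$ and allows values $e^{\Theta(\alpha^2 r)}$. So your propagation claim, which uses only the integer bound and the degree, is false as stated. Moreover, even granting $|q|\le C$ on that window, Bernstein on an interval of length $2r$ gives only $|q'(\xi)|=O(dC/r)$, i.e.\ $d=\Omega(r)$. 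This falls short of $\sqrt{rn}$ by a factor $\sqrt{n/r}$; for $t=\sqrt n$ you get $\Omega(\sqrt n)$ rather than $\Omega(n^{3/4})$.

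What you actually need is: if $|q(i)|\le B$ at all integers of $[0,n]$, then $|q|=O(B)$ on the \emph{long} interval $[a,n-a]$ with $a=\Theta(d^2/n)$. When $d\le\alpha\sqrt{rn}$ this gives $a\le r/2$, and Bernstein on $[a,n-a]$ yields $|q'(\xi)|=O(dB/\sqrt{rn})$, hence $d=\Omega(\sqrt{rn})$. That lemma is true (it follows, e.g., from Rakhmanov's bounds for polynomials with unit discrete norm), but it is the real content of Paturi's lower bound. No case split on where the global maximum of $|q|$ is attained can prove it, because that maximum genuinely is of size $e^{\Theta(d^2/n)}$. The Ehlers--Zeller argument routes everything through the global maximum and Markov's inequality, which is exactly why it stops at $d\approx\sqrt n$. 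Separately, $d\theta=dx/\sqrt{x(n-x)}$, so the angular spacing of the integers near $t$ is about $1/\sqrt{rn}$, not $\sqrt{r/n}/n$; only the former matches your threshold.

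The upper-bound sketch also fails. An affine map sending $\{0,\dots,n\}$ into $[-1,1]$ leaves nothing outside. If you instead send $\{t+1,\dots,n\}$ onto $[-1,1]$, then $T_d$ at $0,\dots,t-1$ reaches $e^{\Theta(d\sqrt{t/n})}$, which is $e^{\Theta(t)}$ at $d=\sqrt{tn}$. No constant-degree amplifier maps such a range to $\approx 1$ while keeping $[-1,1]$ near $0$. In addition, a symmetric function can have $\Theta(n-\Gamma(f))$ jumps near the ends (e.g.\ $f_i=i\bmod 2$ for $i\le s$, constant afterwards), so there is no reduction to $O(1)$ jumps. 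You must reproduce an arbitrary pattern on the $O(n-\Gamma(f))$ extreme weights. One way is quantum counting, since a $T$-query algorithm yields a degree-$2T$ approximant; another is a careful interpolation-times-Chebyshev construction.

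Smaller points:
\begin{itemize}
\item $\Gamma(\nand_k)=k-1$, not $k-2$.
\item $t(n-t)=\Theta(n(n-\Gamma(f)))$ fails at $t=0$; use $(t+1)(n-t)$ instead.
\item Bernstein at the uncontrolled point $\xi\in[t,t+1]$ can be vacuous when $t\in\{0,n-1\}$; for $\nand_k$, $\xi$ may lie arbitrarily close to $k$. Use Markov's inequality $|q'|\le 2d^2C/n$ there.
\end{itemize}
With that last fix, your symmetrization plus Ehlers--Zeller for $d\le\alpha\sqrt n$ plus Markov is the Nisan--Szegedy argument. It does prove $\deg_c(\nand_k)=\Omega(\sqrt k)$, which is all the paper needs.
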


We cannot directly apply this result to $NAND_k^-$ since Paturi's theorem applies only to boolean functions mapping into $\{0,1\}$. Thus, we will instead first apply this theorem to $NAND_k$.

\begin{theorem}\label{thm:nand_k}
The $1/3$-approximate degree of $NAND_k$ is $\Theta(\sqrt{k})$.
\end{theorem}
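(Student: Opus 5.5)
We apply Paturi's theorem (Theorem \ref{thm:paturi}) directly to $NAND_k$, which is symmetric by Lemma \ref{lemma:nand_symmetric}. First we check that $NAND_k$ is non-constant: it evaluates to $1$ on the all-zeros input and to $0$ on the all-ones input (here $k \geq 2$, and in any case $k \geq 1$ suffices). Next we write down its value sequence. Let $f_i$ denote the value of $NAND_k$ on inputs of Hamming weight $i$. Then $f_i = 1$ for $0 \le i \le k-1$ and $f_k = 0$.

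The key step is to compute the jump location $\Gamma(NAND_k)$. The only index $i \in \{0,\dots,k-1\}$ with $f_i \neq f_{i+1}$ is $i = k-1$. Plugging this into the definition of $\Gamma$ gives
\[
\Gamma(NAND_k) = |2(k-1) - k + 1| = |k-1| = k-1 .
\]
Hence $k - \Gamma(NAND_k) = 1$. Taking $c = 1/3 \in (0,1/2)$, Theorem \ref{thm:paturi} yields
\[
\deg_{1/3}(NAND_k) = \Theta\bigl(\sqrt{k \cdot 1}\bigr) = \Theta(\sqrt{k}).
\]

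There is no real obstacle here. The only points requiring care are that the constant $1/3$ lies in the allowed range $(0,1/2)$ and that the minimum in the definition of $\Gamma$ is taken over a single transition, so the formula evaluates cleanly. As a sanity check, the answer matches the classical fact that OR and AND have approximate degree $\Theta(\sqrt{k})$. This is consistent because $NAND_k(x) = 1 - AND_k(x)$, and approximate degree is invariant under $f \mapsto 1-f$. If one preferred to avoid Paturi entirely, one could cite that result instead, or reprove the lower bound via symmetrization and the Markov brothers' inequality.
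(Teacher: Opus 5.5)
Your proof is correct and follows the paper's argument essentially verbatim: you note that $NAND_k$ is a non-constant symmetric function, locate its single jump at $i=k-1$ to get $\Gamma(NAND_k)=k-1$, and apply Paturi's theorem to obtain $\Theta(\sqrt{k\cdot 1})=\Theta(\sqrt{k})$. Your explicit checks that $1/3\in(0,1/2)$ and that $NAND_k$ is non-constant are small additions the paper leaves implicit.
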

\begin{proof}

By Lemma \ref{lemma:nand_symmetric}, $f = NAND_k:\{0,1\}^k \to \{0,1\}$ is a non-constant symmetric boolean function on $n = k$ variables. To calculate $\Gamma(NAND_k)$, we must find the indices $i$ where the value of the function changes, i.e., $f_i \neq f_{i+1}$.
As established in Lemma \ref{lemma:nand_symmetric}: $f_i = 1$ for $0 \le i \le k-1$, and $f_k=0$. Thus, $f_i \neq f_{i+1}$ only occurs when $i = k-1$.
Since there is only one such jump, the minimum over all jumps is simply the value at this index. Substituting $n = k$ and $i = k - 1$ into the expression for $\Gamma(f)$:
\[\Gamma(NAND_k) = |2(k - 1) - k + 1|= |2k - 2 - k + 1| = k - 1 \]
Assuming $k \ge 1$, we have $\Gamma(NAND_k) = k - 1$.
Now, we apply Theorem \ref{thm:paturi} with $n = k$ and $\Gamma(f) = k - 1$:
\begin{align*}
\text{deg}_{1/3}(NAND_k) &= \Theta\left(\sqrt{k(k - \Gamma(NAND_k))}\right) \\
&=  \Theta\left(\sqrt{k(k - (k - 1))}\right) \\
&= \Theta(\sqrt{k})
\end{align*}

\end{proof}


  
  We are now prepared to prove Lemma \ref{lem:approx-degree-main}.
  
\textbf{Lemma \ref{lem:approx-degree-main}} {\it
The $1/3$-approximate degree of $\text{NAND}_k^-$ is $\Omega(\sqrt{k})$.}
\begin{proof}

Recall $\text{NAND}_k^- : \{0,1\}^k \to \{-1,1\}$ is defined by $\text{NAND}_k^- = 2\cdot \text{NAND}_k -1$.
Let $p$ be any real polynomial with minimal degree that satisfies $\|p - \text{NAND}_k^-\|_\infty \le 1/3$. Letting $q = (p+1)/2$,  we have

\begin{align*} 
\|q - \text{NAND}_k\|_\infty &= \left\| \frac{p + 1}{2} - \frac{\text{NAND}_k^- + 1}{2} \right\|_\infty \\
&= \frac{1}{2} \|p - \text{NAND}_k^-\|_\infty \leq \frac{1}{6} < \frac{1}{3}\\
\end{align*}
Moreover, note that $q$ has the same degree as $p$.
By Theorem \ref{thm:nand_k}, any polynomial approximating $\text{NAND}_k$ to error $1/3$ must have degree at least $\Omega(\sqrt{k})$. Therefore:
\[ 
\deg_{1/3}(\text{NAND}_k^-) = \deg(p) = \deg(q) \ge \deg_{1/3}(\text{NAND}_k) = \Omega(\sqrt{k}) 
\]

\end{proof}

\section{Snowflake Embedding of MAX-ABS-IP into IP} \label{sec:ub}
In this section we prove Theorem \ref{thm:snowflake}. Namely, we give an oblivious \textit{Snowflake} Embedding of the Maximum Absolute Value of Inner-Product (MAX-ABS-IP) similarity into the inner product similarity. Recall that given a vector $x \in \R^d$ and a set $Y \subset \R^d$, the MAX-ABS-IP between $x$ and $Y$ is defined as $\max_{y \in Y} |\langle x,y\rangle|$. Here, ``Snowflake'' refers to the fact that the similarity will not be preserved at the exact same scale, and instead the output of the embedding will approximate the original similarity \emph{raise to the $k$-power} for some $k \geq 0$.
Before proving the main result, we first state a few standard results which will be useful for us. 


\begin{lemma}[JL for Inner Products~\cite{arriaga2006algorithmic}]
\label{lem:jl_inner}
Let $\epsilon_0 \in (0, 1)$ and $\delta \in (0, 1)$. For any fixed vectors $u, v \in \mathbb{R}^D$, let $S \in \mathbb{R}^{t \times D}$ be a random matrix with entries drawn uniformly at random from $\{-\frac{1}{\sqrt{t}}, \frac{1}{\sqrt{t}}\}$. If $t = \Theta\left(\epsilon_0^{-2} \log(1/\delta)\right)$, then with probability at least $1 - \delta$:
\[ \left| \langle Su, Sv \rangle - \langle u, v \rangle \right| \le \epsilon_0 \|u\|_2 \|v\|_2 \]
\end{lemma}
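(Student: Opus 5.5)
The statement to prove is Lemma~\ref{lem:jl_inner}. I would reduce it to the standard Johnson--Lindenstrauss norm-preservation guarantee for Rademacher (sign) matrices and then use the polarization identity. By homogeneity of both sides in $u$ and $v$, we may assume $\|u\|_2=\|v\|_2=1$; the case $u=0$ or $v=0$ is trivial.

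The first step is to establish the distributional JL property for $S$. For any fixed unit vector $z$, $\|Sz\|_2^2=\frac1t\sum_{r=1}^t Z_r^2$, where $Z_r=\sum_i \sigma_{r,i} z_i$ and the $\sigma_{r,i}$ are i.i.d.\ signs. Each $Z_r$ has mean zero and variance $\|z\|_2^2=1$, and it is $1$-subgaussian by Hoeffding's lemma. Hence $Z_r^2-1$ is subexponential with $O(1)$ parameter. Bernstein's inequality (equivalently, the Achlioptas moment-comparison argument, which shows the moments of $Z_r$ are dominated by Gaussian moments) then gives
\[
\Pr\bigl[\,|\|Sz\|_2^2-1|>\eta\,\bigr]\le 2e^{-c\eta^2 t}
\]
for $\eta\in(0,1)$. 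This is the only genuinely probabilistic step, and the main technical obstacle. I would cite Achlioptas or Arriaga--Vempala rather than redo the moment-generating-function calculation, noting that sign entries are handled exactly as Gaussian ones thanks to subgaussianity.

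The second step applies this bound to the two fixed vectors $z_\pm=u\pm v$ with $\eta=\epsilon_0/2$, and takes a union bound so that both estimates hold with probability at least $1-4e^{-c\epsilon_0^2 t/4}\ge 1-\delta$ once $t=C\epsilon_0^{-2}\log(1/\delta)$ for a suitable constant $C$. For nonzero $z$, the bound is scale-invariant in the form $|\|Sz\|^2-\|z\|^2|\le\eta\|z\|^2$.

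The third step is polarization. By linearity of $S$,
\[
\langle Su,Sv\rangle=\tfrac14\bigl(\|S(u+v)\|^2-\|S(u-v)\|^2\bigr),\qquad \langle u,v\rangle=\tfrac14\bigl(\|u+v\|^2-\|u-v\|^2\bigr).
\]
Therefore the error is at most
\[
\tfrac14\cdot\tfrac{\epsilon_0}{2}\bigl(\|u+v\|^2+\|u-v\|^2\bigr)=\tfrac{\epsilon_0}{8}\bigl(2\|u\|^2+2\|v\|^2\bigr)=\tfrac{\epsilon_0}{2}\le\epsilon_0,
\]
using the parallelogram law. Rescaling back gives the bound $\epsilon_0\|u\|_2\|v\|_2$. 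Note that normalizing to unit vectors before polarizing is essential: with unnormalized vectors, polarization would only yield $\epsilon_0(\|u\|^2+\|v\|^2)/2$ rather than the product form.
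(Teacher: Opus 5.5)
Your proof is correct. The paper gives no proof of its own and simply cites Arriaga--Vempala, whose argument is the one you give: length preservation for $\pm 1$ projections, applied to $u\pm v$ and combined via polarization, with the same factor $4$ from the union bound.

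One small caveat: absorbing that factor $4$ into $t=C\epsilon_0^{-2}\log(1/\delta)$ requires $\delta$ bounded away from $1$, or $\log(4/\delta)$ in place of $\log(1/\delta)$. The statement itself implicitly assumes this convention.
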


\begin{lemma}[Subadditivity of Fractional Roots]
\label{lem:subadditivity}
For any $A, B \ge 0$ and $k \ge 1$, we have $|A^{1/k} - B^{1/k}| \le |A - B|^{1/k}$.
\end{lemma}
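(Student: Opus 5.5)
Since $t \mapsto t^{1/k}$ is increasing on $[0,\infty)$, we may assume without loss of generality that $A \ge B \ge 0$, so that $|A^{1/k} - B^{1/k}| = A^{1/k} - B^{1/k}$ and $|A-B| = A - B$. The claim then becomes $A^{1/k} \le B^{1/k} + (A-B)^{1/k}$. Setting $C = A - B \ge 0$, this is equivalent to
\[ (B + C)^{1/k} \le B^{1/k} + C^{1/k} \qquad \text{for all } B, C \ge 0, \]
which is the subadditivity of the concave map $t \mapsto t^{1/k}$.

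To prove this, set $s = B^{1/k}$ and $r = C^{1/k}$. Raising both sides to the $k$-th power, which preserves the inequality because both sides are nonnegative and $k \ge 1$, reduces the claim to $s^k + r^k \le (s + r)^k$.

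For integer $k$, this follows from the binomial expansion: every term in the expansion is nonnegative, and the two extreme terms are exactly $s^k$ and $r^k$.

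For real $k \ge 1$, we argue as follows. If $s + r = 0$, the claim is trivial. Otherwise, normalize by setting $\alpha = s/(s+r)$ and $\beta = r/(s+r)$, so that $\alpha, \beta \in [0,1]$ and $\alpha + \beta = 1$. Since $k \ge 1$, we have $\alpha^k \le \alpha$ and $\beta^k \le \beta$, hence $\alpha^k + \beta^k \le \alpha + \beta = 1$. Multiplying through by $(s+r)^k$ gives $s^k + r^k \le (s+r)^k$.

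No step here poses a real obstacle. The only care needed is the reduction to the case $A \ge B$ and the normalization in the non-integer case, which we handle uniformly through the $\alpha,\beta$ argument (the binomial expansion is only a shortcut when $k$ is an integer).
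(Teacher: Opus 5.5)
Your proof is correct and follows essentially the same route as the paper. Both reduce to $A \ge B$, rewrite the claim as subadditivity $(B+C)^{1/k} \le B^{1/k} + C^{1/k}$, and prove it by normalizing by the sum and comparing powers on $[0,1]$; the only cosmetic difference is that you first raise to the $k$-th power and use $z^k \le z$, whereas the paper works directly with $p = 1/k$ and uses $z^{p} \ge z$.
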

\begin{proof}
Without loss of generality, assume $A \ge B \ge 0$. We want to show that $A^{1/k} - B^{1/k} \le (A-B)^{1/k}$, which rearranges to $A^{1/k} \le (A-B)^{1/k} + B^{1/k}$. 
Let $x = A-B \ge 0$ and $y = B \ge 0$. We must show $(x+y)^{1/k} \le x^{1/k} + y^{1/k}$, which is a standard factor for exponents $p = 1/k \le 1$. To see it, note if $x+y=0$, it holds trivially. Otherwise, if $x+y > 0$, dividing the RHS by $(x+y)^p$ gives $\left(\frac{x}{x+y}\right)^p + \left(\frac{y}{x+y}\right)^p$. Since each fractions is in $[0, 1]$ and $p \le 1$, we have $z^p \ge z$ for $z \in [0, 1]$. Thus, the sum is bounded below by $\frac{x}{x+y} + \frac{y}{x+y} = 1$ as desired.
\end{proof}


We now state the main Lemma describing our Snowflake embedding.
\begin{lemma}[Snowflake Embedding of MAX-ABS-IP into IP]
Let $x, y_1, \dots, y_m \in \mathbb{R}^d$ be any unit vectors. Let $\epsilon \in (0, 1)$ be an error tolerance parameter and $\delta \in (0, 1)$ be a failure probability. 
Let $k$ be the smallest even integer with $k \ge \frac{4}{\epsilon} \ln m$, and define $x' = x^{\otimes k}$ and $y' = \sum_{i=1}^m y_i^{\otimes k}$ which are vectors in $\mathbb{R}^{d^k}$. 

Let $S \in \mathbb{R}^{t \times d^k}$ be a random Johnson-Lindenstrauss projection matrix mapping to $t$ dimensions, where $t = \Theta\left( m^2 \left(\frac{2}{\epsilon}\right)^{2k} \log(1/\delta) \right)$.
Then, with probability at least $1 - \delta$, we have:
\[ \left| |\langle Sx', Sy' \rangle|^{1/k} -  \max_{i=1}^m |\langle x, y_i \rangle| \right| < \epsilon \]
\end{lemma}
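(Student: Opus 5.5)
The plan is to first analyze the exact (unprojected) quantity $\langle x', y'\rangle$, then control the error introduced by $S$ via Lemma~\ref{lem:jl_inner}, and finally pass through the $1/k$-th root using Lemma~\ref{lem:subadditivity}. By the tensor identity $\langle u^{\otimes k}, v^{\otimes k}\rangle = \langle u,v\rangle^k$ and linearity,
\[ \langle x', y'\rangle = \sum_{i=1}^m \langle x, y_i\rangle^k = \sum_{i=1}^m |\langle x,y_i\rangle|^k, \]
where we use that $k$ is even. Writing $\mu = \max_i |\langle x,y_i\rangle| \in [0,1]$, this sum $T$ satisfies $\mu^k \le T \le m\mu^k$. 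Hence $\mu \le T^{1/k} \le m^{1/k}\mu$. Since $k \ge \frac{4}{\eps}\ln m$, we get $m^{1/k} = e^{\ln m / k} \le e^{\eps/4} \le 1+\eps/2$ for $\eps<1$. Therefore $|T^{1/k} - \mu| \le \mu\,\eps/2 \le \eps/2$.

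Next, control the JL error. We have $\|x'\|_2 = \|x\|_2^k = 1$ and $\|y'\|_2 \le \sum_i \|y_i\|^k = m$. Apply Lemma~\ref{lem:jl_inner} with $\eps_0 = (\eps/2)^k/m$; this requires $t = \Theta(\eps_0^{-2}\log(1/\delta)) = \Theta(m^2 (2/\eps)^{2k}\log(1/\delta))$, exactly as stated. With probability $1-\delta$,
\[ |\langle Sx', Sy'\rangle - T| \le \eps_0 \cdot 1 \cdot m = (\eps/2)^k. \]

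To pass to roots, note $T \ge 0$, so $\big||\langle Sx',Sy'\rangle| - T\big| \le |\langle Sx',Sy'\rangle - T| \le (\eps/2)^k$ by the reverse triangle inequality. Lemma~\ref{lem:subadditivity} with $A = |\langle Sx',Sy'\rangle|$ and $B = T$ then gives $\big||\langle Sx',Sy'\rangle|^{1/k} - T^{1/k}\big| \le \eps/2$. Combining with the first step via the triangle inequality yields total error at most $\eps$.

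The main obstacle is making the final inequality strict, and being careful that the $m^{1/k}$ factor is bounded by $1+\eps/2$ rather than something larger. For strictness, $e^{\eps/4} < 1+\eps/2$ holds strictly for $\eps\in(0,1)$ (for instance since $e^{z}\le 1+2z$ for $z\le 1$ with slack). If needed, one can also shrink $\eps_0$ by a constant factor absorbed into the $\Theta$ in $t$. The edge case $m=1$ gives $k\ge 0$, so we take $k=2$ (the smallest even positive integer), and the argument goes through unchanged.
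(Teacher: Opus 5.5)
Your proposal is correct and follows essentially the same route as the paper: a triangle-inequality split into a JL term (with $\epsilon_0 = (\epsilon/2)^k/m$, the reverse triangle inequality, and Lemma~\ref{lem:subadditivity}) and a tensor term bounded by $m^{1/k}-1 \le e^{\epsilon/4}-1 < \epsilon/2$. You also note that $m=1$ requires taking $k=2$, since the literal smallest even integer would be $k=0$ and make $1/k$ undefined; the paper's statement glosses over this.
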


\begin{proof}
Define $M = \max_{i=1}^m |\langle x, y_i \rangle|$.
Let $V = \langle x', y' \rangle$ denote the exact inner product in the high-dimensional tensor space $\mathbb{R}^{d^k}$, and let $E = \langle Sx', Sy' \rangle$ denote the approximated inner product after the dimension reduction. By the triangle inequality, we bound the total approximation error by splitting it into two parts:
\begin{equation}
\label{eq:triangle_main}
\left| |E|^{1/k} - M \right| \le \left| |E|^{1/k} - V^{1/k} \right| + \left| V^{1/k} - M \right|
\end{equation}
(We show momentarily in Step 2 that $V \ge 0$, which is why $V^{1/k}$ requires no absolute value). We allocate an error budget of $\epsilon / 2$ to each term.

\paragraph{Step 1: Johnson-Lindenstrauss Projection Bound.}
We first evaluate the distortion due to the projection $S$. We determine the $\ell_2$ norms of the high-dimensional constructed vectors $x'$ and $y'$:
\begin{itemize}
    \item By standard properties of tensor products, $\langle u^{\otimes k}, v^{\otimes k} \rangle = \langle u, v \rangle^k$. Since $x$ is a unit vector, we have $\|x'\|_2 = \sqrt{\langle x, x \rangle^k} = 1^{k/2} = 1$.
    \item By the triangle inequality, we have $\|y'\|_2 = \left\| \sum_{i=1}^m y_i^{\otimes k} \right\|_2 \le \sum_{i=1}^m \|y_i^{\otimes k}\|_2 = \sum_{i=1}^m 1 = m$.
\end{itemize}
Let $\epsilon_0 = \frac{1}{m} \left(\frac{\epsilon}{2}\right)^k$ be an intermediate target tolerance. By Lemma~\ref{lem:jl_inner}, projecting to $t = \Theta\left(\epsilon_0^{-2} \log(1/\delta)\right)$ dimensions guarantees that with probability at least $1 - \delta$:
\[ |E - V| \le \epsilon_0 \|x'\|_2 \|y'\|_2 \le \epsilon_0 (1)(m) = \left(\frac{\epsilon}{2}\right)^k \]
By the reverse triangle inequality and since $V \ge 0$, we have $\left| |E| - V \right| = \left| |E| - |V| \right| \le |E - V|$. Using the subadditivity of the $k$-th root from Lemma~\ref{lem:subadditivity}, we scale the additive error back down:
\begin{equation}
\label{eq:jl_err}
\left| |E|^{1/k} - V^{1/k} \right| \le \left| |E| - V \right|^{1/k} \le |E - V|^{1/k} \le \left( \left(\frac{\epsilon}{2}\right)^k \right)^{1/k} = \frac{\epsilon}{2}
\end{equation}
Substituting our $\epsilon_0$ requirement into the dimension formula yields the target dimension $t$:
\[ t = \Theta\left( \frac{1}{\epsilon_0^2} \log(1/\delta) \right) = \Theta\left( \frac{m^2}{(\epsilon/2)^{2k}} \log(1/\delta) \right) = \Theta\left( m^2 \left(\frac{2}{\epsilon}\right)^{2k} \log(1/\delta) \right) \]

\paragraph{Step 2: The Tensor Product Argument and $\ell_k$ Bound.}
Next, we expand the exact high-dimensional inner product $V$ to bound the theoretical tensor approximation error. Using the multilinearity of the tensor product:
\[ V = \left\langle x^{\otimes k}, \sum_{i=1}^m y_i^{\otimes k} \right\rangle = \sum_{i=1}^m \langle x^{\otimes k}, y_i^{\otimes k} \rangle = \sum_{i=1}^m \langle x, y_i \rangle^k \]
Crucially, because we constrained $k$ to be an strictly \textit{even} integer, the sign of each term neutralizes, meaning $\langle x, y_i \rangle^k = |\langle x, y_i \rangle|^k \ge 0$. This confirms $V \ge 0$. Thus:
\[ V^{1/k} = \left( \sum_{i=1}^m |\langle x, y_i \rangle|^k \right)^{1/k} \]
Notice $V^{1/k}$ evaluates identically to the $\ell_k$ norm of a vector $z \in \mathbb{R}^m$ given by $z_i = |\langle x, y_i \rangle|$. The maximum absolute inner product $M$ is exactly its $\ell_\infty$ norm $\|z\|_\infty$. Using the standard $p$-norm inequalities for $m$-dimensional vectors, $\|z\|_\infty \le \|z\|_k \le m^{1/k} \|z\|_\infty$, we bracket the estimate:
\[ M \le V^{1/k} \le m^{1/k} M \]
Because $x$ and $y_i$ are unit vectors, Cauchy-Schwarz forces $M \le 1$. Hence, the absolute approximation error is:
\[ \left| V^{1/k} - M \right| = V^{1/k} - M \le M(m^{1/k} - 1) \le m^{1/k} - 1 \]
We specified $k \ge \frac{4}{\epsilon} \ln m$, which implies $\frac{1}{k} \ln m \le \frac{\epsilon}{4}$. Exponentiating both sides yields $m^{1/k} \le e^{\epsilon/4}$. For $u \in (0, 1/4]$, the exponential function exhibits the strict bound $e^u < 1 + 2u$. Since $\epsilon \in (0, 1)$, our parameter $u = \epsilon/4 < 1/4$, granting us:
\begin{equation}
\label{eq:tensor_err}
\left| V^{1/k} - M \right| \le e^{\epsilon/4} - 1 < 2\left(\frac{\epsilon}{4}\right) = \frac{\epsilon}{2}
\end{equation}

\paragraph{Step 3: Conclusion.}
We combine our constituent bounds by substituting Equations~\eqref{eq:jl_err} and~\eqref{eq:tensor_err} back into the original triangle inequality (Equation~\eqref{eq:triangle_main}):
\[ \left| |\langle Sx', Sy' \rangle|^{1/k} - M \right| < \frac{\epsilon}{2} + \frac{\epsilon}{2} = \epsilon \]
The strict upper bound $\epsilon$ holds with probability at least $1 - \delta$, completing the proof.
\end{proof}

\section{Conclusion}
In this work, we give the first formal separation of multi-vector and single-vector embeddings. Specifically, we demonstrate the existence of a set of $n_1,n_2 \leq 2^m$ query and document multi-vector representations of size at most $m$, such that any single-vector representation giving a pair-wise $\eps$ approximation of all Chamfer similarities requires dimension $m^{\Omega(1/\eps)}$. This gives provable justification for the widely-held view that multi-vector models are more powerful than single-vector models. Moreover, this is a step closer to matching the upper bound of $m^{O(1/\eps^2)}$ given by MUVERA~\cite{jayaram2024muvera}. We leave it as an open problem to close the gap and determine precisely what the correct $\epsilon$ dependency should be. As a partial step towards this problem, we pose the open problem of determining the exact complexity of embedding MAX-IP into IP. Since our lower bound holds even for the less general MAX-IP similarity, it may possibly be easier to match it with an improved upper bound for MAX-IP. We believe that our Snowflake Embedding from Theorem \ref{thm:snowflake} gives a positive indication that such an improved embedding from MAX-IP into IP may exist.

\section*{Acknowledgments}
The author would like to thank Josh Alman for suggesting to use the Pattern Matrix Method as a means of proving lower bounds against the approximate rank of a Chamfer similarity matrix. 

\bibliographystyle{alpha}
\bibliography{sample, main}

@inproceedings{khattab2020colbert,
  title={Colbert: Efficient and effective passage search via contextualized late interaction over bert},
  author={Khattab, Omar and Zaharia, Matei},
  booktitle={Proceedings of the 43rd International ACM SIGIR conference on research and development in Information Retrieval},
  pages={39--48},
  year={2020}
}

@article{lakshman2025breaking,
  title={Breaking the Curse of Dimensionality: On the Stability of Modern Vector Retrieval},
  author={Lakshman, Vihan and Munyampirwa, Blaise and Shun, Julian and Coleman, Benjamin},
  journal={arXiv preprint arXiv:2512.12458},
  year={2025}
}

@article{agarwal2026strengths,
  title={On Strengths and Limitations of Single-Vector Embeddings},
  author={Agarwal, Mihir and Garg, Ankit and Kayal, Neeraj and Shiragur, Kirankumar and others},
  journal={arXiv preprint arXiv:2603.29519},
  year={2026}
}

@article{weller2025theoretical,
  title={On the theoretical limitations of embedding-based retrieval},
  author={Weller, Orion and Boratko, Michael and Naim, Iftekhar and Lee, Jinhyuk},
  journal={arXiv preprint arXiv:2508.21038},
  year={2025}
}

@article{halevi2026approximate,
  title={Approximate Algorithms for Chamfer Distance Under Translation},
  author={Halevi, Gil and Zhang, Daniel and Zhang, Jason},
  journal={arXiv preprint arXiv:2605.25280},
  year={2026}
}

@inproceedings{scheerer2025warp,
  title={WARP: An efficient engine for multi-vector retrieval},
  author={Scheerer, Jan Luca and Zaharia, Matei and Potts, Christopher and Alonso, Gustavo and Khattab, Omar},
  booktitle={Proceedings of the 48th international ACM SIGIR conference on research and development in information retrieval},
  pages={2504--2512},
  year={2025}
}

@article{arriaga2006algorithmic,
  title={An algorithmic theory of learning: Robust concepts and random projection},
  author={Arriaga, Rosa I and Vempala, Santosh},
  journal={Machine learning},
  volume={63},
  number={2},
  pages={161--182},
  year={2006},
  publisher={Springer}
}

@article{engels2024dessert,
  title={DESSERT: An Efficient Algorithm for Vector Set Search with Vector Set Queries},
  author={Engels, Joshua and Coleman, Benjamin and Lakshman, Vihan and Shrivastava, Anshumali},
  journal={Advances in Neural Information Processing Systems},
  volume={36},
  year={2024}
}

@inproceedings{formal2021splade,
  title={Splade: Sparse lexical and expansion model for first stage ranking},
  author={Formal, Thibault and Piwowarski, Benjamin and Clinchant, St{\'e}phane},
  booktitle={Proceedings of the 44th International ACM SIGIR Conference on Research and Development in Information Retrieval},
  pages={2288--2292},
  year={2021}
}

@String{ARXIV            = {arXiv preprint arXiv:}}

@Misc{T,
  Title                    = {Twitter},
  HowPublished             = {Available as \texttt{https://twitter.com/}}
}

@article{santhanam2021colbertv2,
  title={ColBERTv2: effective and efficient retrieval via lightweight late interaction (2022)},
  author={Santhanam, Keshav and Khattab, Omar and Saad-Falcon, Jon and Potts, Christopher and Zaharia, Matei},
  journal={URL preprint arXiv:2112.01488},
  year={2021}
}

@inproceedings{hofstatter2022introducing,
  title={Introducing neural bag of whole-words with colberter: Contextualized late interactions using enhanced reduction},
  author={Hofst{\"a}tter, Sebastian and Khattab, Omar and Althammer, Sophia and Sertkan, Mete and Hanbury, Allan},
  booktitle={Proceedings of the 31st ACM International Conference on Information \& Knowledge Management},
  pages={737--747},
  year={2022}
}

@article{gao2021coil,
  title={COIL: Revisit exact lexical match in information retrieval with contextualized inverted list},
  author={Gao, Luyu and Dai, Zhuyun and Callan, Jamie},
  journal={arXiv preprint arXiv:2104.07186},
  year={2021}
}

@article{lee2024rethinking,
  title={Rethinking the role of token retrieval in multi-vector retrieval},
  author={Lee, Jinhyuk and Dai, Zhuyun and Duddu, Sai Meher Karthik and Lei, Tao and Naim, Iftekhar and Chang, Ming-Wei and Zhao, Vincent},
  journal={Advances in Neural Information Processing Systems},
  volume={36},
  year={2024}
}

@article{lin2024fine,
  title={Fine-grained late-interaction multi-modal retrieval for retrieval augmented visual question answering},
  author={Lin, Weizhe and Chen, Jinghong and Mei, Jingbiao and Coca, Alexandru and Byrne, Bill},
  journal={Advances in Neural Information Processing Systems},
  volume={36},
  year={2024}
}

@article{qian2022multi,
  title={Multi-vector retrieval as sparse alignment},
  author={Qian, Yujie and Lee, Jinhyuk and Duddu, Sai Meher Karthik and Dai, Zhuyun and Brahma, Siddhartha and Naim, Iftekhar and Lei, Tao and Zhao, Vincent Y},
  journal={arXiv preprint arXiv:2211.01267},
  year={2022}
}

@inproceedings{wang2021pseudo,
  title={Pseudo-relevance feedback for multiple representation dense retrieval},
  author={Wang, Xiao and Macdonald, Craig and Tonellotto, Nicola and Ounis, Iadh},
  booktitle={Proceedings of the 2021 ACM SIGIR International Conference on Theory of Information Retrieval},
  pages={297--306},
  year={2021}
}

@article{yao2021filip,
  title={Filip: Fine-grained interactive language-image pre-training},
  author={Yao, Lewei and Huang, Runhui and Hou, Lu and Lu, Guansong and Niu, Minzhe and Xu, Hang and Liang, Xiaodan and Li, Zhenguo and Jiang, Xin and Xu, Chunjing},
  journal={arXiv preprint arXiv:2111.07783},
  year={2021}
}

@article{muennighoff2022mteb,
  title={MTEB: Massive text embedding benchmark},
  author={Muennighoff, Niklas and Tazi, Nouamane and Magne, Lo{\"\i}c and Reimers, Nils},
  journal={arXiv preprint arXiv:2210.07316},
  year={2022}
}

@article{tyson2005characterizations,
  title={Characterizations of snowflake metric spaces},
  author={Tyson, Jeremy T and Wu, Jang-Mei},
  journal={Annales Fennici Mathematici},
  volume={30},
  number={2},
  pages={313--336},
  year={2005}
}

@book{david1997fractured,
  title={Fractured fractals and broken dreams: self-similar geometry through metric and measure},
  author={David, Guy and Semmes, Stephen},
  number={7},
  year={1997},
  publisher={Oxford University Press}
}

@incollection{indyk20178,
  title={8: low-distortion embeddings of finite metric spaces},
  author={Indyk, Piotr and Matou{\v{s}}ek, Ji{\v{r}}{\'\i} and Sidiropoulos, Anastasios},
  booktitle={Handbook of discrete and computational geometry},
  pages={211--231},
  year={2017},
  publisher={Chapman and Hall/CRC}
}

@article{johnson1984extensions,
  title={Extensions of Lipschitz mappings into a Hilbert space},
  author={Johnson, William B and Lindenstrauss, Joram and others},
  journal={Contemporary mathematics},
  volume={26},
  number={189-206},
  pages={1},
  year={1984}
}

@article{zhang2016neural,
  title={Neural information retrieval: A literature review},
  author={Zhang, Ye and Rahman, Md Mustafizur and Braylan, Alex and Dang, Brandon and Chang, Heng-Lu and Kim, Henna and McNamara, Quinten and Angert, Aaron and Banner, Edward and Khetan, Vivek and others},
  journal={arXiv preprint arXiv:1611.06792},
  year={2016}
}

@article{dhulipala2024muveraarxiv,
  title={Muvera: Multi-vector retrieval via fixed dimensional encodings},
  author={Dhulipala, Laxman and Hadian, Majid and Jayaram, Rajesh and Lee, Jason and Mirrokni, Vahab},
  journal={arXiv preprint arXiv:2405.19504},
  year={2024}
}

@inproceedings{Paturi_1992_stoc,
  title={On the degree of polynomials that approximate symmetric boolean functions (preliminary version)},
  author={Paturi, Ramamohan},
  booktitle={Proceedings of the twenty-fourth annual ACM symposium on Theory of computing},
  pages={468--474},
  year={1992}
}

@inproceedings{sherstov2008pattern,
  title={The pattern matrix method for lower bounds on quantum communication},
  author={Sherstov, Alexander A},
  booktitle={Proceedings of the fortieth annual ACM symposium on Theory of computing},
  pages={85--94},
  year={2008}
}

@inproceedings{santhanam2022plaid,
  title={{PLAID}: An efficient engine for late interaction retrieval},
  author={Santhanam, Keshav and Khattab, Omar and Saad-Falcon, Jon and Potts, Christopher and Zaharia, Matei},
  booktitle={Proceedings of the 31st ACM International Conference on Information \& Knowledge Management},
  pages={1747--1756},
  year={2022}
}

@article{goranci2026fully,
  title={Fully Dynamic Algorithms for Chamfer Distance},
  author={Goranci, Gramoz and Jiang, Shaofeng and Kiss, Peter and Szilagyi, Eva and Yang, Qiaoyuan},
  journal={Advances in Neural Information Processing Systems},
  volume={38},
  pages={36314--36341},
  year={2026}
}

@article{bakshi2023near,
  title={Near-linear time algorithm for the chamfer distance},
  author={Bakshi, Ainesh and Indyk, Piotr and Jayaram, Rajesh and Silwal, Sandeep and Waingarten, Erik},
  journal={Advances in Neural Information Processing Systems},
  volume={36},
  pages={66833--66844},
  year={2023}
}

@inproceedings{faysse2025colpali,
  title={Colpali: Efficient document retrieval with vision language models},
  author={Faysse, Manuel and Sibille, Hugues and Wu, Tony and Omrani, Bilel and Viaud, Gautier and Hudelot, C{\'e}line and Colombo, Pierre},
  booktitle={International Conference on Learning Representations},
  volume={2025},
  pages={61424--61449},
  year={2025}
}

@article{nardini2024efficient,
  title={Efficient Multi-Vector Dense Retrieval Using Bit Vectors},
  author={Nardini, Franco Maria and Rulli, Cosimo and Venturini, Rossano},
  journal={arXiv preprint arXiv:2404.02805},
  year={2024}
}

@article{jayaram2024muvera,
  title={{MUVERA}: Multi-vector retrieval via fixed dimensional encoding},
  author={Jayaram, Rajesh and Dhulipala, Laxman and Hadian, Majid and Lee, Jason D and Mirrokni, Vahab},
  journal={Advances in Neural Information Processing Systems},
  volume={37},
  pages={101042--101073},
  year={2024}
}

\end{document}